\newtheorem{lemma}{Lemma}
\newtheorem{prop}[lemma]{Proposition}
\newtheorem{theorem}[lemma]{Theorem}
\newtheorem{rem}[lemma]{Remark}
\newcommand{\re}{\begin{rem}\rm}
\newcommand{\mar}{\end{rem}}
\newtheorem{exam}[lemma]{Example}
\newcommand{\qd}{\end{proof}\vspace{0.5ex}}
\newcommand{\prf}{\begin{proof}[\bf Proof:]}
\newcommand{\frS}{\mathfrak{S}}
\def\blambda{\bm{\lambda}}
\def\SU{\mathop{\rm SU}}
\newcommand{\pl}{\hspace{.1cm}}
\newcommand{\al}{\alpha}
\newcommand{\eps}{\varepsilon}
\newcommand{\norm}[2]{\parallel \! #1 \! \parallel_{#2}}
\newcommand{\ketbra}[1]{|{#1}\rangle\langle{#1}|}
\newcommand{\ten}{\otimes}
\DeclareMathOperator{\Tr}{\operatorname{Tr}}
\def\argmax{\mathop{\rm argmax}}
\newcommand{\cH}{\mathcal{H}}
\def\Label#1{\label{#1}\ [\ \text{#1}\ ]\ }
\def\Label{\label}
\begin{document}

\title{Resolvability of classical-quantum channels}

\author{Masahito Hayashi, % \IEEEmembership{Fellow, IEEE},
Hao-Chung Cheng,
and Li Gao
\thanks{The work of MH was supported
in part by
the National Natural Science Foundation of
China (Grant No.~62171212).
HC is supported by NSTC 113-2119-M-001-009, No.~NSTC 113-2628-E-002-029, No.~NTU-113V1904-5, No.~NTU-CC-113L891605, and No.~NTU-113L900702. LG is partially supported by the National Natural Science Foundation of
China.
}
\thanks{Masahito Hayashi is with
School of Data Science, The Chinese University of Hong Kong,
Shenzhen, Longgang District, Shenzhen, 518172, China
%International Quantum Academy, Futian District, Shenzhen 518048, China,
and
the Graduate School of Mathematics, Nagoya University, Nagoya, 464-8602, Japan
(E-mail: \texttt{\href{mailto:hmasahito@cuhk.edu.cn}{hmasahito@cuhk.edu.cn}}
).
Li Gao is with
the  School of Mathematics and Statistics, Wuhan University, Wuhan, 430072, China
(E-mail: \texttt{\href{mailto:gao.li@whu.edu.cn}{gao.li@whu.edu.cn}}
).
Hao-Chung Cheng is with the Department of Electrical Engineering, the Graduate Institute of Communication Engineering, the Department of Mathematics, and the Institute of Applied Mathematical Sciences, and the Center for Quantum Science and Engineering, National Taiwan University, Taipei 10617, Taiwan,
and the Physics Division, National Center for Theoretical Sciences, Taipei 10617, Taiwan
and with the Hon Hai (Foxconn) Quantum Computing Center, New Taipei City 236, Taiwan
(E-mail: \texttt{\href{mailto:haochung@ntu.edu.tw}{haochung@ntu.edu.tw}}).}

}
\markboth{M.~Hayashi,
L.~Gao, and H.-C.~Cheng: Resolvability of CQ channels}{}

\maketitle

\begin{abstract}
Channel resolvability concerns the minimum resolution for approximating the channel output.
We study the resolvability of classical-quantum channels in two settings, for the channel output generated from the worst input, and form the fixed independent and identically distributed (i.i.d.) input.
The direct part of the worst-input setting is derived from sequential hypothesis testing as it involves of non-i.i.d.~inputs. The strong converse of the worst-input setting is obtained via the connection to identification codes.
For the fixed-input setting, while the direct part follows from
the known quantum soft covering result, % \cite{hayashi2012quantum, cheng2023error},
 we exploit the recent alternative quantum Sanov theorem to solve the strong converse. %st \cite{another}.
\end{abstract}

\begin{IEEEkeywords}
Channel resolvability,
classical-quantum channel,
identification code,
worst-input formulation,
fixed-input formulation
\end{IEEEkeywords}

\section{Introduction} \label{sec:introduction}

%% Classical channel resolvability and its applications (literature)
Channel resolvability is a problem to approximate or simulate the channel output by using limited number of input symbols.
For classical channels, this problem dates back to Wyner \cite{Wyn75b} for investigating common information between correlated random variables \cite{Wyn75b, Cuff08, YT18, YT20, YT22_book}, and Han and Verd{\'u}'s study of approximating output statistics \cite{han1993approximation}, \cite[§ 3]{Han03}.
The nature of this problem is closed related the fundamental concept of \emph{covering}.
Hence it constitutes a key tool in information theory with applications to a wealth of cryptographic and simulation problems, including
secrecy for a wiretap channel \cite{Wyn75, CK78, hayashi2006general2, hayashi2011, BL13, PTM17},
identification via channels \cite{AD89},
channel synthesis based on the soft covering lemma \cite{Cuf13, Cuff, LCV16, PTM17, YaCu, Yassaee},
rate distortion theory \cite{SV96},
random binning \cite{YAM+14, MBG+19},
as well as noise stability \cite{YT22_book, Yu24}.
The aforementioned covering problems can be broadly subsumed under the umbrella of the reverse Shannon theorem \cite{BSS+02, BCR11, BDH+14, OCC+24} as the problem nature is to simulate a stochastic object.
This is dual to Shannon's noisy coding theorem for communication over noisy channels as a packing problem, for which the aim is to discriminate among stochastic objects.

%% Classical-Quantum Chanels (literature)
The history of the channel resolvability for classical-quantum (CQ) channel goes back to the papers \cite{1377491,CWY}, which  discussed approximations of the output states under
CQ wiretap channels.
The work \cite{hayashi2012quantum} explicitly formulated the problem of resolvability of CQ channels with a fixed input distribution,
and obtained the exponential decay of the approximation error in (independent and identically distributed) i.i.d.~limits.
Later, the paper \cite{hayashi2012quantum} applied it to CQ wiretap channels for the derivation of the exponential decreasing rate of information leakage.
Also, the papers \cite{loeber,ahlswede2002strong} addressed identification codes for CQ channels based on a probability deviation bound for operators (see also \cite[§ 17]{Wilde17}).
Recently, several researchers studied the topic of soft covering for CQ channels \cite{DW05, Win04, Win05, CWY, 1377491, hayashi2012quantum,  cheng2023error, SGC22b, SGC23}, \cite[§ 9]{hbook},
which gives a random coding approach to direct part of channel resolvability \cite{hayashi2012quantum}.
Such a technique applies to secret key generation from quantum states \cite{DW05, Win05, CHW16, KKG+19},
measurement compression \cite{Win04},
private communication over quantum channels \cite{1377491, CWY, WNI, hayashi2012quantum, Wil17b},
privacy amplification against quantum side information \cite{SGC22b, SGC23},
quantum channel simulation \cite{BSS+02, LD09, BDH+14},
convex split lemma \cite{ADJ17, AARJ65, Wil17b, CG23, CGB23},
and distributed quantum source simulation \cite[Theorem 8.13]{hbook}, \cite{GHC23, GC24}.
Last year, the paper \cite{APW} studied the channel resolvability for quantum to quantum channels.

%% Statement of the problem
In this paper, we consider the optimal rate of channel resolvability, namely, the resolution rate required to approximate the channel output via optimal codes with asymptotically vanishing error.
This problem admits two settings.
The first one considers channel output being generated from a memoryless and stationary channel with i.i.d.~input from a fixed distribution, for which we call the \emph{fixed-input setting}; see Figure~\ref{fig:fixed-input} below.
The second one is the \emph{worst-input setting}, where we consider the outputs coming from a memoryless and stationary channel with the worst possible input distribution (possibly correlated non-i.i.d.); see Figure~\ref{fig:worst-input} below.
In other words, the worst-input setting concerns the channel output that
is the hardest to approximate.
Originally, the worst-input setting for classical channels was introduced for the strong converse of identification codes \cite{han1993approximation}.
That is, by deriving the direct part of the worst-input channel resolvability, the paper \cite{han1993approximation} showed the strong converse part of identification codes. At the same time, the strong converse of the worst-input channel resolvability was also obtained by the achievability of identification capacity.
An essential tool introduced in above analysis is the information spectrum method, which enables us to handle a general sequence of channels and sources.
The other setting regarding the fixed-input formulation was used for achieving strong secrecy for classical wiretap channel, which derives
the exponential decreasing rate for the amount of information leakage \cite{hayashi2006general2}
while the papers \cite{1377491,CWY,WNI} implicitly used similar techniques for CQ channels.
The strong converse and second order asymptotics for the fixed-input setting was shown in \cite{watanabe2014strong}.

%% Main contributions - Worst-input case
The main contributions of this paper is to establish the optimal rate of CQ channel resolvability for both the worst-input and the fixed-input settings.
For the worst-input setting, we show that the optimal rate is given by the maximum mutual information between the CQ channel input-output over all input distributions.
This quantity coincides with the channel capacity of message transmission over the CQ channel \cite{Hol98, ON99, Win99b} and also with its identification capacity \cite{loeber, ahlswede2002strong}.
We establish the result by employing two key tools: CQ channel version of information spectrum \cite{1207373} for the direct part, and the identification codes for CQ channel for its strong converse.

%% Main contributions - Fixed-input case
For the fixed-input setting, we show that the optimal rate is given by the minimum mutual information between the CQ channel input-output over all input distributions yielding the same target output state.
Here we focus on the strong converse, as the direct part can be shown via the classical-quantum soft covering lemma \cite{hayashi2012quantum,cheng2023error, SGC22b}.
Our strategy is an extension of the method by the paper \cite{watanabe2014strong} for the classical case, where the key idea is a separation of good codewords and bad codewords, classified based on the empirical distribution. Nevertheless, a direct adaption to quantum setting does not works. The main issue is the failure of Lemma 2 in \cite{watanabe2014strong}, which states if a codeword has the empirical output distribution away from the target distribution (hence a bad codeword), then its typical set is disjoint of the typical set of the target distribution so that one distinguishes the good and bad codes by some natural test. Such a nice separation does not exist in quantum because the output quantum states have different eigenbasis to the target state, then one cannot easily conclude the orthogonality of typical subspaces based on the separation of codewords.

We note that the classical approach in \cite{watanabe2014strong} implicitly uses the Sanov theorem, which states that in the i.i.d.~asymptotic setting,
the probability that the empirical distribution is different from the true distribution approaches zero exponentially.
However, a well-known quantum version of the Sanov theorem \cite{bjelakovic2005quantum,notzel2014hypothesis} does not discuss a quantum counterpart of empirical distributions.
Armed with the latest paper \cite{another} formulating
another quantum version of the Sanov theorem, we manage to adopt the quantum analogue of the strategy in \cite{watanabe2014strong}.
We remark that the strong converse for the fixed-input setting is more technical challenging as the method for the worst-input setting and the standard strong converse analysis for packing problems via data-processing inequality (see e.g.,~\cite{CG24}) do not apply here.

%% Comments on fully quanutum channel resolvability
Lastly, we explain why channel resolvability of a fully quantum channel does not cover that of a CQ channel.
An entanglement breaking (EB) channel is given by a measurement and state preparation process.
When using a projective measurement over a computation basis,
an EB channel becomes essentially a CQ channel for the aim of message transmission.
That is, a code for message transmission with such an EB channel
can be considered as a code for message transmission with the corresponding CQ channel.
However, this relation does not hold
for channel resolvability. This is because,
a pure input state for an EB channel
often works as a randomized input over a CQ channel.
That is, a channel resolvability code over an EB channel
cannot be simulated by a code over CQ channel of the same rate.
Therefore, the result in \cite{APW} for fully quantum channels does not directly apply to
channel resolvability for CQ channels.
We refer the readers to Remark~\ref{REM} for a more detailed discussion.

%% Organization
The remaining part of this paper is organized as follows.
In Section \ref{S2}, we first formulates the problem of
the worst-input and the fixed-input settings of channel resolvability for CQ channels.
Section \ref{S3} reviews the results for classical channels.
We show the direct part for the worst-input setting in Section \ref{S4}.
The strong converse of the fixed-input setting is proved in Section \ref{S5}.
Section \ref{S6} studies identification codes for
classical-quantum channels.
Section \ref{S7} is devoted to the strong converse of the worst-input setting.
We end the paper with a discussion in Section \ref{S8}.

{\bf Notations.} We write $\{A\le B\}$ for the support projection of the non-negative part of $B-A$. Given a state $\rho$, we sometimes use the short notation $\rho\{A\le B\}=\Tr(\rho \{A\le B\} )$.
The trace norm of a hermitian matrix is defined as $\|A\|_1 := \Tr( |A| )$. The logarithm in this paper is always with base $2$.

\section{Formulation of Channel Resolvability and our results}\Label{S2}
Let $\mathcal{X}$ be a finite alphabet corresponding to the input system.
and $\mathcal{H}$ be a finite-dimensional Hilbert space
corresponding to the output system, which is described by $B$.
We denote the set of density matrices over $\mathcal{H}$
by $\mathcal{S}(\mathcal{H})$.
Let $W:\mathcal{X}\to  \mathcal{S}(\mathcal{H}), x\mapsto W_x$ be a classical-quantum channel, where $\{W_x\}_{x\in \mathcal{X}}\subset \mathcal{S}(\mathcal{H})$ is the family of output states on $\mathcal{H}$. We denote by $\mathcal{P}(\mathcal{X})$ the set of probability distribution on $X$,
\begin{align*} \mathcal{P}(\mathcal{X})=\left\{p:\mathcal{X}\to [0,\infty)\ \Big|\  \sum_{x\in\mathcal{X}}p(x)=1
\right\}.
\end{align*}
Given an input distribution $p\in \mathcal{P}(\mathcal{X})$, we write
\[W(p):=\sum_{x \in \mathcal{X} }p(x)W_x\]
as the induced output state, and \begin{align*}
W \times p&:=\sum_{x} p(x)\ketbra{x}\ten W_x, \\
 \rho\times p&:=\sum_{x} p(x)\ketbra{x}\ten \rho,
\end{align*}
for the joint input-output states.
The key information quantities is quantum relative entropy
and mutual information, which are defined as follows.
Given two density matrices $\rho$ and $\sigma$ with $\text{supp}(\rho)\subseteq \text{supp}(\sigma)$,
the quantum relative entropy \cite{Ume54} is defined as
\begin{align}
D(\rho\|\sigma):=\Tr \rho (\log \rho-\log \sigma ).
\end{align}
The mutual information under the joint state $W \times p$
is given as
\begin{align}
I(X;B)_{W \times p}:=\sum_{x \in \mathcal{X}}p(x) D(W_x\|W(p)).
\end{align}

Let $M\in \mathbb{N}_+$ be a positive integer.
We define the \emph{optimal resolution error} of $M$-types to the input $p$ as
\begin{align}
	\label{eq:defn:error}
	\eps(p,W,M) :=\inf_{|\mathcal{C}|=M} \frac{1}{2}\norm{W_{\mathcal{C}}
-W(p)}{1},
\end{align}
where
the infimum is over all codebook $\mathcal{C}=(x_1,\cdots, x_M)\in \mathcal{X}^M$ of cardinality $M$ and $W_{\mathcal{C}}:=
\frac{1}{M}\sum_{m=1}^M W_{x_m}$.
Denote the set of $M$-types probability distributions %(resp. less than $M$)
\begin{align*} &\mathcal{P}_M(\mathcal{X})=\{ p\in \mathcal{P}(\mathcal{X})\ |\  M p(x)\in \mathbb{Z} \text{ for all } x\in \mathcal{X} \}.%\\ &\mathcal{P}_{\le M}(X)=\bigcup_{1\le m\le M}\mathcal{P}_m(X)
\end{align*}
The optimal resolution error can be equivalently formulated as
 \begin{align*}&\eps(p,W,M)=\inf_{q\in \mathcal{P}_M(\mathcal{X})}\frac{1}{2}\norm{W(p)-W(q)}{1},\\
   &\eps(W,M):=\sup_{p\in \mathcal{P}(\mathcal{X})}\inf_{q\in \mathcal{P}_M(\mathcal{X})}\frac{1}{2}\norm{W(p)-W(q)}{1},
 \end{align*}
where $\eps(W,M)$ is the \emph{worst case error} over all input distributions.

In the i.i.d.~$n$-shot setting, a memoryless and stationary channel maps each $n$-letter input $x^n \in \mathcal{X}^n$ to an output product state, namely,
\begin{align*}
W^{\otimes n}: x_1 x_2\ldots x_n \mapsto
W^{\otimes n}_{x^n}:=
W_{x_1} \otimes W_{x_2} \otimes \cdots \otimes W_{x_n}.
\end{align*}
Then, we define the corresponding optimal resolution error as
\begin{align*} &\eps(p^n,W^{\otimes n},M)=\inf_{q^n \in \mathcal{P}_M(\mathcal{X}^n)}\frac{1}{2}\norm{W^{\otimes n}(p^n)-W^{\otimes n}(q^n)}{1},
	\\
&\eps(W^{\otimes n},M)=\sup_{p^n \in \mathcal{P}(\mathcal{X}^n)}\inf_{q^n \in \mathcal{P}_M(\mathcal{X}^n)}\frac{1}{2}\norm{W^{\otimes n}(p^n )-W^{\otimes n}(q^n)}{1}.
 \end{align*}
%Here, $W^{\otimes n}: x_1 x_2\ldots x_n \mapsto W_{x_1} \otimes W_{x_2} \otimes \cdots \otimes W_{x_n}$ is an i.i.d.~channel, and $p^{\otimes n} \in \mathcal{P}(\mathcal{X}^n)$ is an $n$-fold i.i.d.~distribution.

The \emph{resolvability rate} for a fixed input $p$ (resp.~the worst-input) are given by the minimum achievable rate such that the optimal resolution error vanishes, i.e.,~
\begin{align*}
R(p,W)&:=\inf \{R>0\ |\ \forall \gamma>0, \liminf_{n\to\infty} \eps(p^{\otimes n},W^{\otimes n},\lfloor2^{n(R+\gamma)}\rfloor)=0\},
\\
R(W)&:=\inf \{R>0\ |\  \forall \gamma>0, \liminf_{n\to\infty} \eps(W^{\otimes n}, \lfloor2^{n(R+\gamma)}\rfloor,n)=0\}.
 \end{align*}
The quantity $R(p,W)$ corresponds to the optimal rate of CQ channel resolvability for the fixed-input setting (see Figure~\ref{fig:fixed-input}), while $R(W)$ is the optimal rate for the worst-input setting (see Figure~\ref{fig:worst-input}).

\begin{figure}[h!]
	\includegraphics[width=1\linewidth]{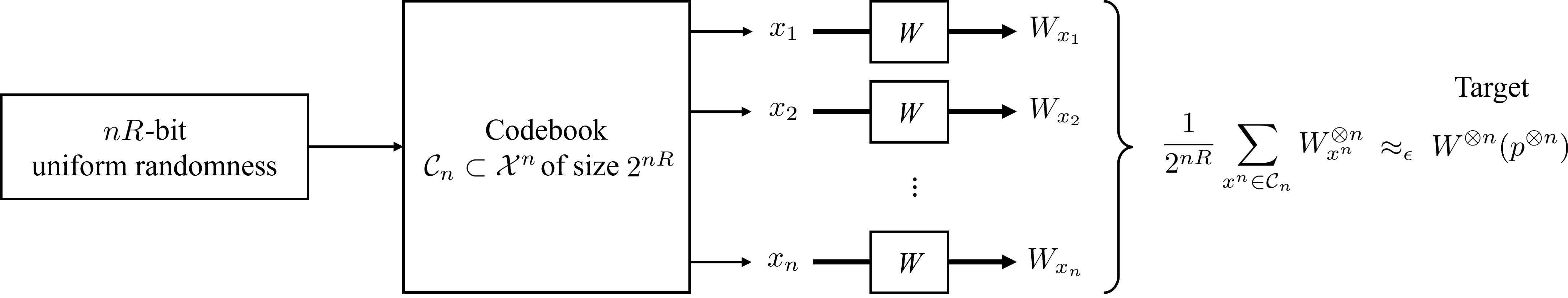}
	\centering
	\caption{A schematic diagram of channel resolvability for the fixed-input case.
		The goal is to employ $nR$-bit uniform randomness and an optimal codebook of size $2^{nR}$ to simulate the target state $W^{\otimes n}(p^{\otimes n})$, which is generated from a fixed i.i.d.~input distribution $p^{\otimes n}$.
		The minimum asymptotic rate $\liminf_{n\to \infty} \frac1n \log M$ for an $\epsilon$-approximation is $R_{\epsilon}(p,W)$ defined in \eqref{eq:def:fixed-input}.
	}
	\label{fig:fixed-input}
\end{figure}

\begin{figure}[h!]
	\includegraphics[width=1\linewidth]{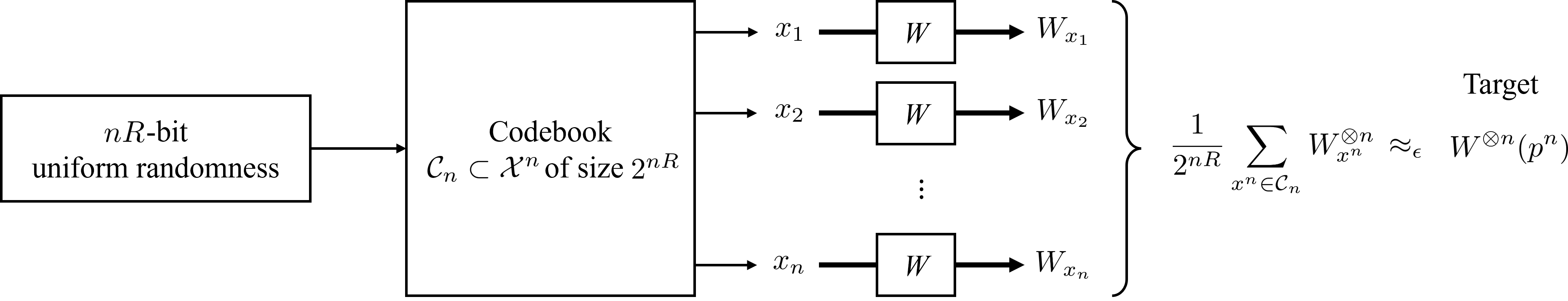}
	\centering
	\caption{A schematic diagram of channel resolvability for the worst-input case.
		The goal is to $nR$-bit uniform randomness and an optimal codebook of size $2^{nR}$ to simulate the target state $W^{\otimes n}(p^n)$, which is generated from a worst  (probably non-i.i.d.) input distribution $p^{n} \in \mathcal{P}(\mathcal{X}^n)$.
		The minimum asymptotic rate $\liminf_{n\to \infty} \frac1n \log M$ for an $\epsilon$-approximation is $R_{\epsilon}(W)$ defined in \eqref{eq:def:worst-input}.
	}
	\label{fig:worst-input}	
\end{figure}

 Our main results can be stated as follows
 \begin{align}
 &R(p,W)=\inf_{q:W(q)=W(p)}I(X;B)_{ W\times q},\Label{MA1}\\
 &R(W)=C(W):=\sup_{p \in \mathcal{P}(\mathcal{X}) }I(X;B)_{W\times p},\Label{MA2}
 \end{align}
 where $C(W)$ is channel capacity of $W$, and the infimum in  \eqref{MA1} is over all input distribution $q$ whose output $W(q)$ coincides with the target state $W(p)$.
 Moreover, we have the strong converse,
 \begin{align}
 &\liminf_{n\to \infty }\eps(p^{\otimes n}, W^{\otimes n},\lfloor2^{nR}\rfloor)\to 1\pl, \pl \text{ if } R < R(p,W), \Label{MA3}\\
 &\liminf_{n\to \infty } \eps(W^{\otimes n},\lfloor2^{nR}\rfloor)\to 1 \pl, \pl \text{ if } R < R(W).\Label{MA4}
 \end{align}
To state the strong converse in another word,
for fixed $\epsilon\in (0,1)$, we define
\begin{align}
	R_{\epsilon}(p,W):= \liminf_{n\to \infty}\frac{1}{n}\log M_\epsilon(p^{\otimes n},W^{\otimes n})\pl ,\quad   &M_\epsilon(p,W):= \min\{M \in \mathbb{N}_+ \ |\ \eps(p,W,M) \le \epsilon\},
	\label{eq:def:fixed-input}
	\\
	R_{\epsilon}(W):= \liminf_{n\to \infty}\frac{1}{n}\log M_\epsilon(W^{\otimes n})\pl ,\quad  &M_\epsilon(W):= \min\{M \in \mathbb{N}_+ \ |\ \eps(W,M) \le \epsilon\}.
	\label{eq:def:worst-input}
\end{align}
Clearly, for all $0< \epsilon <1$, we have
\begin{align}
R_{\epsilon}(W) &\le R(W),\Label{PSE}\\
R_{\epsilon}(p,W) &\le R(p,W)\Label{PSE2}.
\end{align}
Therefore, the strong converse parts \eqref{MA3} and \eqref{MA4}
can be restated as the equalities in \eqref{PSE} and \eqref{PSE2},
which can be derived by showing
\begin{align}
R_{\epsilon}(W) &\ge C(W),\Label{PSE3}\\
R_{\epsilon}(p,W) &\ge \inf_{q:W(q)=W(p)}I(X;B)_{ W\times q}\Label{PSE4}.
\end{align}
That is, the aim of the remaining part of the paper is to prove
the part $\le$ in \eqref{MA1},\eqref{MA2} (i.e.,~the direct part)
and to prove \eqref{PSE3}, \eqref{PSE4} (i.e.,~the strong converse). Note that here we do not consider the zero error resolvality $\epsilon=0$, which is different with the vanishing error case $R(W)$, and in most of case, an exact resolvability is not feasible for finite code length $n\in\mathbb{N}_+$.

The following theorems summarize our main results.

\begin{theorem}[Worst-Input Setting] \label{theorem:worst-input}
	For any classical-quantum channel $W:\mathcal{X} \to \mathcal{S(H)}$, the resolvability rate for the worst-input setting, defined in \eqref{MA2} and \eqref{eq:def:worst-input}, is given by
	\begin{align*}
		R(W)=R_{\epsilon}(W)
		=C(W):= \sup_{p \in \mathcal{P}(\mathcal{X}) }I(X;B)_{W\times p},
		\quad \forall \epsilon \in (0,1).
	\end{align*}
\end{theorem}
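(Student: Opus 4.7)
The proof splits naturally into the direct part $R(W) \le C(W)$ and the strong converse $R_\epsilon(W) \ge C(W)$ for every $\epsilon \in (0,1)$; together with the trivial inequality \eqref{PSE}, these give the chain of equalities asserted in the theorem.

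For the direct part, the plan is to fix $R > C(W)$ and an arbitrary sequence of input distributions $\{p^n\}_{n\in\bN}$ with $p^n \in \mathcal{P}(\mathcal{X}^n)$ (allowed to be non-i.i.d.~and correlated), and to apply a single-shot CQ resolvability bound based on the quantum information-spectrum technique of \cite{1207373} developed in Section \ref{S4}. Such a bound estimates $\eps(p^n, W^{\otimes n}, M)$ in terms of the tail probability, under $W^{\otimes n} \times p^n$, of the spectral divergence random variable relative to $p^n \otimes W^{\otimes n}(p^n)$ exceeding $\log M$. The additivity of Holevo mutual information yields $\frac{1}{n} I(X^n; B^n)_{W^{\otimes n} \times p^n} \le C(W)$ for every $n$ and every $p^n$. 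An argument based on sequential hypothesis testing then shows that the corresponding quantum spectral sup-mutual-information rate is also bounded by $C(W)$, so for $R > C(W)$ this tail probability vanishes and the resolvability error tends to zero uniformly in $\{p^n\}$, which gives $R(W) \le C(W)$.

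For the strong converse, the plan is to pass through CQ identification codes, extending the Han--Verd\'u strategy. First, in Section \ref{S6} we establish that the CQ identification capacity equals $C(W)$ \cite{loeber, ahlswede2002strong} and, more importantly, that it satisfies a strong converse: no CQ identification code of rate exceeding $C(W)$ exists for any pair of error thresholds $\lambda_1, \lambda_2 \in (0,1)$. Second, in Section \ref{S7} we convert any worst-input resolvability code of rate $R$ into an identification code of asymptotically the same rate by grouping the resolvability codewords into subsets indexed by identification messages, using the resulting mixed inputs to generate output states close to a common target, and discriminating these mixtures via a suitable quantum hypothesis test whose correctness follows from the resolvability guarantee. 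Combining these two steps yields $R_\epsilon(W) \ge C(W)$.

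The main obstacle is expected to be the second step of the strong converse. The classical Han--Verd\'u construction of identification codes from resolvability codes relies on set-theoretic arguments about distribution supports that do not translate directly to the noncommutative setting, because the output states $\{W_x\}_{x\in \mathcal{X}}$ need not commute and carry no natural analogue of a ``support'' compatible with the additive structure of distributions. The remedy is to replace these set manipulations by quantum hypothesis tests built from suitably smoothed spectral projectors, chosen so that both type-I and type-II errors are controlled jointly. The direct part, by contrast, should be fairly routine once the CQ information-spectrum framework of \cite{1207373} and additivity of the Holevo capacity are in hand.
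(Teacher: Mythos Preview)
Your direct-part sketch is in the right spirit and broadly matches Section~\ref{S4}: a one-shot resolvability bound (Lemma~\ref{LL1B}, the quantum analogue of \cite[Eq.~(20)]{hayashi2006general2}) is combined with the CQ information-spectrum framework of \cite{1207373,4069150}, and the controlling input is the identity $\sup_{\vec{p}}\overline{I}^\dagger(\epsilon\,|\,\vec{p},W^{\otimes n})=C(W)$ from \cite{1207373}.

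Your strong-converse plan, however, has the implication running the wrong way. You propose to (i) invoke the \emph{strong converse} for CQ identification and (ii) convert a resolvability code of rate $R$ into an identification code of the same rate. Even if step~(ii) could be made precise, the combination would only show that every resolvability rate $R$ for which the error is small satisfies $R\le C(W)$---that is the direct part again, not the inequality $R_\epsilon(W)\ge C(W)$ you are after. (Indeed, the construction you describe---grouping codewords into subsets to form mixed inputs---is the Ahlswede--Dueck device for the \emph{achievability} of ID codes, and it feeds the direct part of ID, not any converse.) The paper's Sections~\ref{S6}--\ref{S7} run the logic in the opposite direction: one uses the \emph{achievability} of ID codes, i.e.\ the existence of $(n,N,\lambda_1,\lambda_2)$ codes with $\frac{1}{n}\log\log N\to C(W)$ already supplied by \cite{loeber,ahlswede2002strong}, together with the elementary pigeonhole Lemma~\ref{lemma:bridge}: if an $(N,\lambda_1,\lambda_2)$ ID code exists and $\eps(W,M)<1-\lambda_1-\lambda_2$, then necessarily $|\mathcal{X}|^M\ge N$. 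Two logarithms then give $R_\epsilon(W)\ge C_{\lambda_1,\lambda_2}(W)=C(W)$. No conversion of resolvability codes into ID codes is performed, and consequently the obstacle you anticipate---translating set-theoretic support arguments to noncommuting outputs---never arises: Lemma~\ref{lemma:bridge} is nothing more than a trace-norm triangle inequality plus counting, and its proof is verbatim the classical one.
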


\begin{theorem}[Fixed-Input Setting] \label{theorem:fixed-input}
	For any classical-quantum channel $W:\mathcal{X} \to \mathcal{S(H)}$ and any input distribution $p\in\mathcal{P(X)}$,
	the resolvability rate for the fixed-input setting, defined in \eqref{MA1} and \eqref{eq:def:fixed-input}, is given by
	\begin{align*}
		R(p,W)=R_{\epsilon}(p,W)
		= \inf_{q:W(q)=W(p)}I(X;B)_{ W\times q},
		\quad \forall \epsilon \in (0,1).
	\end{align*}
Moreover, for $R<R(p,W)$, we have $\displaystyle\liminf_{n\to \infty }\eps(p^{\otimes n}, W^{\otimes n},\lfloor2^{nR}\rfloor)\to 1$ exponentially.
\end{theorem}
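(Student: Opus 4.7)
The plan divides into the direct part and the strong converse, of which only the latter requires new ideas. For the direct part, for any auxiliary input distribution $q\in\mathcal{P}(\mathcal{X})$ with $W(q)=W(p)$ one has the identity $W^{\otimes n}(q^{\otimes n})=W(q)^{\otimes n}=W(p)^{\otimes n}=W^{\otimes n}(p^{\otimes n})$, so I would draw the codebook $\mathcal{C}=(X_1^n,\ldots,X_M^n)$ i.i.d.~from $q^{\otimes n}$ and invoke the classical-quantum soft covering lemma \cite{hayashi2012quantum,cheng2023error,SGC22b}, which gives $\mathbb{E}\tfrac{1}{2}\|W_{\mathcal{C}}-W^{\otimes n}(p^{\otimes n})\|_1\to 0$ whenever $\tfrac{1}{n}\log M>I(X;B)_{W\times q}$. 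Derandomization and optimization over admissible $q$ yield the upper bound $R(p,W)\le \inf_{q:W(q)=W(p)}I(X;B)_{W\times q}$.

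For the strong converse I will adapt Watanabe's good/bad codeword decomposition \cite{watanabe2014strong} to the CQ setting. Fix a small $\delta>0$ and call a codeword $x^n\in\mathcal{X}^n$ \emph{good} if its empirical type $\hat{q}_{x^n}$ satisfies $\tfrac{1}{2}\|W(\hat{q}_{x^n})-W(p)\|_1\le \delta$ and \emph{bad} otherwise; split any codebook $\mathcal{C}$ as $\mathcal{C}_g\cup\mathcal{C}_b$. If $W_{\mathcal{C}}$ is an $\epsilon$-approximation of $W(p)^{\otimes n}$, the triangle inequality forces either $|\mathcal{C}_b|/M$ to be small or the bad partial average $W_{\mathcal{C}_b}$ itself to be close to $W(p)^{\otimes n}$. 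On the good branch, one further partitions $\mathcal{C}_g$ according to the polynomially many admissible types in $\mathcal{P}_n(\mathcal{X})\cap\{\hat{q}:\|W(\hat{q})-W(p)\|_1\le 2\delta\}$, applies a single-type strong converse inside the largest resulting type class (of size at least $|\mathcal{C}_g|/(n+1)^{|\mathcal{X}|}$) using the permutation symmetry of the target $W(\hat{q})^{\otimes n}$ and a data-processing argument on the codeword-to-output joint state, and appeals to continuity of $I(X;B)_{W\times \cdot}$ to conclude $\tfrac{1}{n}\log|\mathcal{C}_g|\ge \inf_{q:W(q)=W(p)}I(X;B)_{W\times q}-o_\delta(1)-o(1)$.

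The main obstacle is ruling out the bad branch, since Watanabe's Lemma~2 exploits classical type-class disjointness that has no direct analogue in the quantum setting: $W^{\otimes n}_{x^n}$ and $W(p)^{\otimes n}$ share no common eigenbasis in general, so their weakly typical projectors are neither disjoint nor nested. Here I would apply the alternative quantum Sanov theorem of \cite{another}, which furnishes a sequence of tests $0\le T_n\le I$ with $\Tr(W(p)^{\otimes n}T_n)\to 1$ while $\Tr(W^{\otimes n}_{x^n}T_n)\le 2^{-nc(\delta)}$ uniformly over all bad $x^n$, for some $c(\delta)>0$. Testing the approximation inequality $\tfrac{1}{2}\|W_{\mathcal{C}}-W(p)^{\otimes n}\|_1 \ge \Tr(T_n W(p)^{\otimes n})-\Tr(T_n W_{\mathcal{C}})$ with this $T_n$ then shows that the bad partial average contributes only an exponentially small amount, so the approximation must be carried essentially by $\mathcal{C}_g$. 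Combined with the good-branch lower bound this gives $R_\epsilon(p,W)\ge \inf_{q:W(q)=W(p)}I(X;B)_{W\times q}$, and tracking the exponents throughout both branches delivers the claimed exponential convergence $\eps(p^{\otimes n},W^{\otimes n},\lfloor 2^{nR}\rfloor)\to 1$ for $R<R(p,W)$, with exponent controlled jointly by $c(\delta)$ and the gap $\inf_{q:W(q)=W(p)}I(X;B)_{W\times q}-R$.
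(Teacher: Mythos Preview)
Your direct part and the good/bad codeword split match the paper, and your use of the alternative quantum Sanov theorem \cite{another} to neutralize the bad codewords is exactly the paper's Step~4. The genuine gap is your good-branch argument. You invoke a ``single-type strong converse \ldots\ using the permutation symmetry of the target $W(\hat q)^{\otimes n}$ and a data-processing argument on the codeword-to-output joint state'' as if this were an available tool. It is not, and the paper states explicitly that ``the standard strong converse analysis for packing problems via data-processing inequality \ldots\ do not apply here.'' Concretely: even when all codewords share a type $\hat q$ close to $p$, you must show that $\tfrac12\|W^{\otimes n}_{\mathcal C_g}-W(p)^{\otimes n}\|_1\to 1$ whenever $|\mathcal C_g|\le 2^{nR}$ with $R<I(X;B)_{W\times \hat q}$. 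A data-processing bound on the joint state $\tfrac1M\sum_m |m\rangle\langle m|\otimes W^{\otimes n}_{x_m^n}$ yields at best $\log M\ge I(M;B^n)$, but relating $I(M;B^n)$ to $nI(X;B)_{W\times\hat q}$ requires replacing $W^{\otimes n}_{\mathcal C_g}$ by $W(p)^{\otimes n}$ inside a relative entropy, and continuity of $D(\cdot\|\sigma)$ in $\sigma$ is not dimension-free, so the $\epsilon$-closeness in trace norm does not survive the $n\to\infty$ limit. Permutation symmetry does not rescue this: the individual $W^{\otimes n}_{x_m^n}$ are not permutation-invariant, only their type-averaged version is.

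What the paper actually does for good codewords (Step~5) is substantially more delicate. It builds a \emph{single} test $Q_n=\{\mathcal E_{\mathcal B}(W^{\otimes n}_{\mathcal C_n})\ge 2^\alpha T_{r,n}W(p)^{\otimes n}\}$ that depends on the code itself and on the Sanov projector $T_{r,n}$, then bounds $\Tr W^{\otimes n}_{x^n}(I-Q_n)$ for good $x^n$ by passing through two pinchings (first $\mathcal E_{\mathcal B}$, then the binary pinching $\mathcal E_n$ onto $Q_n$) and applying a sandwiched R\'enyi relative entropy estimate (their \eqref{EE58}), which converts the likelihood-ratio test into the exponent $s_0(R_2-R_1)$. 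The Sanov projector $T_{r,n}$ enters the very definition of $Q_n$, so the bad and good branches are handled by one test; your two-test scheme would still need to be merged to get the trace distance to $1$ rather than merely bounded away from $0$. In short, the missing idea is the construction of $Q_n$ and the R\'enyi-entropy bound for good codewords; without it the strong converse does not go through.
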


\section{Classical Channel Resolvability}\Label{S3}
For classical channels, i.e., $W:x\mapsto W_x$ with $W_x\in
\mathcal{P}(\mathcal{Y})$ being probability distributions over some finite alphabet $\mathcal{Y}$, the worst-input channel resolvability rate was shown to coincide with Shannon capacity  (see \cite{han1993approximation,verdu1994general} and also \cite{hayashi2006general2})
\begin{align} R(W)= C(W)=\sup_{p \in \mathcal{P(X)} } I(X;Y)_{ W\times p}\Label{eq:shannon}\end{align}
where $I(X;Y)= \sum_{x \in \mathcal{X}}p(x) D(W_x\|W(p)) $ is the mutual information of the joint input-output distribution $W\times p$. Later,
the reference \cite{watanabe2014strong} proved that for a fixed input $p \in \mathcal{P(X)}$,
\begin{align} R(p,W)= \inf_{q:W(q)=W(p)} I(X;Y)_{W\times q}. \Label{eq:fixed}\end{align}
Here, the infimum is over all possible input $q$ such that $W(q)=W(p)$. Such infimum is needed as the resolution error only depends the output $W(p)$ but not directly depend on the input $p$.
Moreover, the reference \cite{watanabe2014strong}
actually proved the strong converse that for any $R<R(p,W)$,
 \[\lim_{n\to\infty} 1-\eps(p, W^{\otimes n}, \lfloor2^{nR}\rfloor)=0 \text{ exponentially }.\]

For the relation between $R(p,W)$ and $R(W)$, one has
\begin{align}
\sup_{p \in \mathcal{P(X)}} R(p,W) =\sup_{p \in \mathcal{P(X)} } \inf_{q:W(q)=W(p)} I(X;Y)_{  W\times q}
\le \sup_{p \in \mathcal{P(X)} } I(X;Y)_{W\times p}=R(W).\Label{NBSA}
\end{align}

Here we emphasis that (the converse of) the fixed-input case \eqref{eq:fixed} does not imply (the converse of) the worst case \eqref{eq:shannon}. One can consider the case that there are two distributions $p$ and $q$ such that
\[W(p)=W(q), \quad I(X;Y)_{W\times q}<I(X;Y)_{ W\times p}=C(W).\]
which gives a strict inequality $\sup_{p\in\mathcal{P(X)}} R(p,W)<R(W)$ in \eqref{NBSA}.
Hence, even taking maximum of fixed-input rate $R(p,W)$ over all inputs $p\in\mathcal{P(X)}$, one cannot recover
the channel resolvability rate $R(W)$ for the worst input \eqref{eq:shannon}.
For this reason, we cannot say that
the converse for $R(W)$ is an easier problem than
the converse for $R(p,W)$.

\begin{exam} \label{example:separation}
	{\rm Consider the classical channel $W:\{0,1,\mathsf{e}\}\to \{0,1\}$.\\
		\begin{align} \label{table:separation}
			\begin{tabular}{|c|c|c|c|}
			\hline
			\diagbox{$Y$}{$X$} & $0$   & $1$   & $\mathsf{e}$
			\\
			\hline
			$0$ & $1-\epsilon$ & $\epsilon$ & $1/2$
			\\
			\hline
			$1$ & $\epsilon$ & $1-\epsilon$ & $1/2$
			\\
			\hline
		\end{tabular}
		\end{align}
		Given an input distribution $p=(1/2,1/2,0)$ that puts $50\%-50\%$ weights on letter $0$ and $1$,
		we calculate $I(X;Y)_{W\times p} = 1 - h(\epsilon)$, where $h(\epsilon) := \epsilon \log \frac{1}{\epsilon} + (1-\epsilon )\log \frac{1}{1-\epsilon}$ is the binary entropy function on $(\epsilon, 1-\epsilon)$.
		On the other hand,
		\begin{align}
			\notag
			D(W_0 \Vert W(p)) = D(W_1 \Vert W(p)) &= 1 - h(\epsilon),
			\\
			\notag
			D(W_{\mathsf{e}} \Vert W(p)) &= 0.
		\end{align}
		Hence, the channel capacity of $W$ is given by $C(W) = 1 - h(\epsilon)$.
		
		The capacity-achieving output distribution $W(p) = (1/2, 1/2)$ can be exactly simulated by another input distribution $q = (0,0,1)$ putting all weights on the letter $\mathsf{e}$, i.e.,~$W(q) = W(p) = (1/2, 1/2)$.
		In this case,
		$ I(X;Y)_{W\times q} = 0$.
		This showcases a strict separation for the inequality in  \eqref{NBSA} unless $\epsilon = 1/2$ (for which $D(W_0 \Vert W(p)) = D(W_1 \Vert W(p)) =
		D(W_{\mathsf{e}} \Vert W(p)) = 0$).
		The readers are referred to Figure~\ref{fig:separation} for an illustration.
		
		\begin{figure}[h!]
			\includegraphics[width=0.6\linewidth]{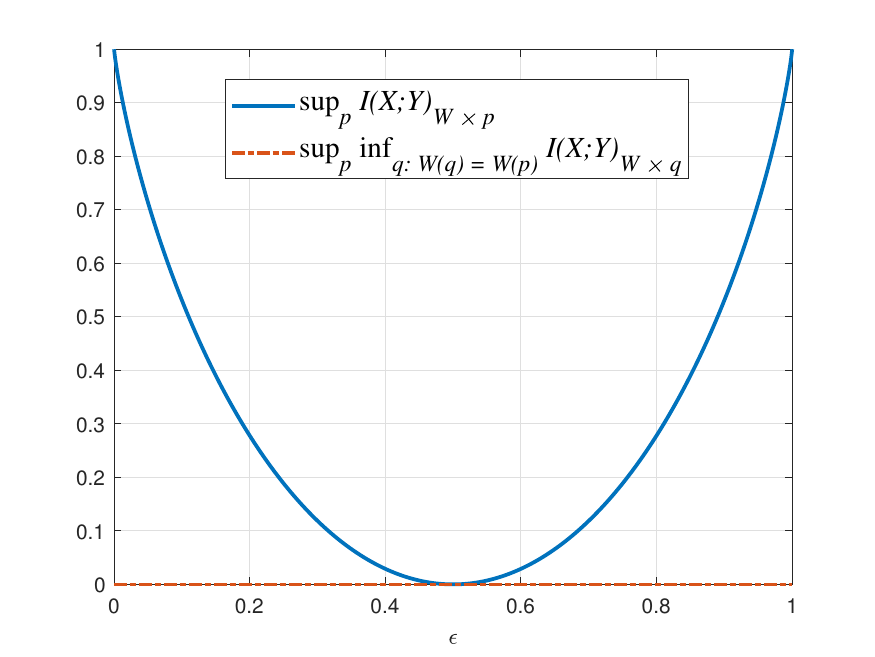}
			\centering
			\label{fig:separation}
			\caption{An illustration of the separation for \eqref{NBSA} via the channel given in \eqref{table:separation}.
			The upper curve is the channel capacity $1 + \epsilon \log \epsilon + (1-\epsilon )\log (1-\epsilon)$;
			the lower dotted curve is always $0$.
			The two curves coincide only when $\epsilon = 1/2$.
			}
		\end{figure}

	}
\end{exam}

%% Li's original example
%\begin{exam}{\rm Consider the classical channel $W:\{0,1,2\}\to \{0,1\}$.\\
%\[\begin{tabular}{|l|l|l|l|}
%\hline
%\diagbox{$Y$}{$X$} & 0   & 1   & 2     \\ \hline
%0 & 1/2 & 1/3 & 2/3  \\ \hline
%1 & 1/2 & 2/3 & 1/3   \\ \hline
%\end{tabular}\]
%Given the input distribution $p=(p_0,p_1,p_2)$, we have the output entropy
%\[ H(W(p))=h(\frac{1}{2}-\frac{1}{6}(p_2-p_1))\]
%only depending on the difference $p_2-p_1$. But the mutual information
%\[ I(X;Y)_{W\times p}=H(p)+H(W(p))-H(W\times p)= p_0\log\frac{3}{2}+\frac{2}{3}(p_1+p_2)\log 2 +\log 3+h(\frac{1}{2}-\frac{1}{6}(p_2-p_1))\]
%also depending on $p_1+p_2$. Suppose the optimal input for $R(W)$ is $p=(p_0,p_1,p_2)$ with $\delta=p_2-p_1$. One take
%\[q_1=(1-\delta, 0,\delta) ,\pl  q_2=(0,\frac{1-\delta}{2},\frac{1+\delta}{2})\pl.\]
%Since $\frac{2}{3}\log 2> \log \frac{3}{2}$, we have $I(X;Y)_{W\times q_1}<I(X;Y)_{W\times q_2}\le I(X;Y)_{W\times p}$.
%}
%\end{exam}

\section{Direct part for CQ-Channel Resolvability}\Label{S4}

The direct part of the fixed-input setting (which is also called the achievability part) follows from random coding strategies studied in \cite[§ IV]{hayashi2012quantum}, \cite[§ III]{cheng2023error}, \cite[§ IV]{SGC22b} under the name \emph{quantum soft covering}.
The idea is to employ a random codebook $\mathcal{C} = \{ x_1, x_2, \ldots, x_M  \}$ of size $M$, where each codeword $x_m$ is pairwise-independently drawn from a distribution $q \in \mathcal{P(X)}$ satisfying $W(q) = W(p)$.
Then, one can relax the optimal resolution error introduced in \eqref{eq:defn:error} as follows:
\begin{align*}
	\eps(p,W,M) \leq \frac{1}{2}\mathds{E}_{\mathcal{C} \sim q } \norm{W_{\mathcal{C}}
		-W(q)}{1}.
\end{align*}

It was proved in \cite[Theorem 1]{cheng2023error} that for any $\al\in (1,2]$,
\begin{align}\label{eq:sc}
	\frac{1}{2}\mathds{E}_{\mathcal{C} \sim q }\left\|W_{\mathcal{C}}
		-W(q)\right\|_{1}\le 2^{\frac{2}{\al}-2} 2^{\frac{\al-1}{\al}I_\al(X;B)_{W\times q}-\log M},
\end{align}
where $I_\al(X;B)_{W\times q} :=\inf_{\sigma} \frac{1}{\al-1}\log(\sum_{x}q(x)\| \sigma^{\frac{1-\al}{2\al}} W_x \sigma^{\frac{1-\al}{2\al}} \|_{\al}^\al)$ is the sandwiched R\'enyi mutual information \cite{MDS+13, WWY14, HT14, CGH18}.
Since $\displaystyle\lim_{\al\to 1}I_\al(X;B)_{W\times q}=I(X;B)_{W\times q}$, \eqref{eq:sc} implies (see also \cite[Lemma 10]{hayashi2012quantum} and \cite[Proposition 14]{SGC22b}) that,  if $R > I(X;B)_{W\times q}$,
\[\eps(p^{\otimes n}, W^{\otimes n}, \lceil2^{nR}\rceil)\le 2^{-n\sup_{\al\in (1,2]}\frac{\al-1}{\al}(R-I_\al(X;B)_{W\times q})}\to 0. \]
Minimizing all input distribution $q$ satisfying $W(q) = W(p)$ gives the direct part for the fixed-input setting, i.e.,
the part of $\le$ in \eqref{MA1} and Theorem~\ref{theorem:fixed-input}.

The remaining part of this section studies the part of $\le$ in \eqref{MA2} and Theorem~\ref{theorem:worst-input}.
In the worst-input setting, the analysis for $R(W)$ is more involved because
it requires the analysis for non-i.i.d.~inputs.
Moreover, as illustrated in \eqref{NBSA} and Example~\ref{example:separation}, there are channels $W$ such that $\sup_p \inf_{q:W(q)=W(p)} I(X;B)_{W\times q} < \sup_p I(X;B)_{W\times p}$.
Hence, the direct part of the fixed-input setting, i.e., $R(p,W) \leq \inf_{q:W(q)=W(p)} I(X;B)_{W\times q}$ is \emph{not} sufficient to prove the direct part of the worst-input setting.

To solve the worst-input case in the classical case, the paper \cite{hayashi2006general2} employs
its Eq.~(19) for $R(p,W)$, and its Eq.~(20) for $R(W)$.
Both are one-shot error bounds for classical-quantum channel communication.
The soft covering lemma employed in \cite[Lemma 10]{hayashi2012quantum}, \cite[Theorem 1]{cheng2023error}, and \cite[(27)]{SGC22b}
can be considered as quantum extensions of \cite[Eq.~(19)]{hayashi2006general2}.
Here, we develop a quantum extension of \cite[Eq.~(20)]{hayashi2006general2}.
We first recall the following lemma from the book \cite{hbook}.

\begin{lemma}[\protect{\cite[Lemma 9.2]{hbook}}]\Label{LL2}
For any state $\sigma$, we have
\begin{align}
&\min_{q\in \mathcal{P}_{ M}(X)}
\|W(p)-W(q) \|_1 \nonumber\\
&\le  4 \sqrt{\sum_{x}p(x) \Tr W_x
\{ \mathcal{E}_{\sigma} (W_x)\ge C \sigma \}
}+ \sqrt{\frac{v'}{M}\sum_x p(x) \Tr \sigma^{-1} \mathcal{E}_\sigma (W_x)^2
\{\mathcal{E}_{\sigma} (W_x)< C \sigma \}},
\end{align}
where $\mathcal{E}_{\sigma}$ is the pinching map of the spectrum of $\sigma$ and $v'$ is the number of distinct eigenvalues of $\sigma$.
\end{lemma}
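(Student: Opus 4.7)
The plan is to prove the bound by a random-coding argument together with a good/bad truncation of each output state against the reference $\sigma$. First, I will draw $X_1,\dots,X_M$ i.i.d. from $p$, form the empirical $M$-type $q_M := \tfrac{1}{M}\sum_{m=1}^M \delta_{X_m} \in \mathcal{P}_M(\mathcal{X})$, and use
\[
\min_{q \in \mathcal{P}_M(\mathcal{X})}\|W(p) - W(q)\|_1 \le \mathbb{E}\|W(p) - W(q_M)\|_1,
\]
so that it suffices to bound the right-hand side in expectation.

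Next, introduce the spectral projection $P_x := \{\mathcal{E}_\sigma(W_x) < C\sigma\}$, which commutes with $\sigma$ by construction, and define the truncated ``good'' part $W_x^{(g)} := P_x W_x P_x$. Writing
\[
W(p) - W(q_M) = \sum_x (p(x) - q_M(x))\, W_x^{(g)} + \sum_x p(x)\,(W_x - W_x^{(g)}) - \sum_x q_M(x)\,(W_x - W_x^{(g)})
\]
and applying the triangle inequality splits the problem into three pieces. For the two ``bad'' pieces, I will use the gentle-measurement inequality $\|W_x - P_x W_x P_x\|_1 \le 2\sqrt{\Tr W_x P_x^\perp}$ together with Jensen's inequality and the identity $\mathbb{E}[q_M(x)] = p(x)$, bounding their expected trace norms by $2\sqrt{\sum_x p(x) \Tr W_x\{\mathcal{E}_\sigma(W_x) \ge C\sigma\}}$ each, which produces the factor $4$ in the first term.

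For the ``good'' piece, I will apply the operator Cauchy--Schwarz inequality $\|A\|_1 \le \sqrt{\Tr \sigma}\cdot\sqrt{\Tr \sigma^{-1} A^2}$ (valid for Hermitian $A$ whose support lies in that of $\sigma$) and take expectation, reducing the problem to a $\sigma$-weighted Hilbert--Schmidt variance. Independence of the $X_m$ collapses the double sum over $m,m'$ to its diagonal and produces the $1/M$ factor:
\[
\mathbb{E}\,\Tr \sigma^{-1}\!\left(\sum_x (p(x) - q_M(x))\, W_x^{(g)}\right)^{\!2} \le \frac{1}{M}\sum_x p(x)\,\Tr \sigma^{-1} (W_x^{(g)})^2.
\]
To turn $(W_x^{(g)})^2$ into $\mathcal{E}_\sigma(W_x)^2\{\mathcal{E}_\sigma(W_x) < C\sigma\}$ and so match the stated form, I will invoke the pinching inequality $W_x^{(g)} \le v'\,\mathcal{E}_\sigma(W_x^{(g)})$ and note that, because $P_x$ commutes with $\sigma$, $\mathcal{E}_\sigma(W_x^{(g)}) = P_x \mathcal{E}_\sigma(W_x) P_x$ already lies in the commutant of $\sigma$, so that $\Tr \sigma^{-1}(W_x^{(g)})^2 \le v'\,\Tr \sigma^{-1}\mathcal{E}_\sigma(W_x)^2\{\mathcal{E}_\sigma(W_x) < C\sigma\}$ by the self-adjointness of $\mathcal{E}_\sigma$ under the trace inner product.

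The main obstacle I expect is this last pinching step: one must sequence the operator inequality $W_x^{(g)} \le v'\mathcal{E}_\sigma(W_x^{(g)})$ inside the $\sigma^{-1}$-weighted quadratic form and then trade one factor of $W_x^{(g)}$ for $\mathcal{E}_\sigma(W_x^{(g)})$ using that $\sigma^{-1}\mathcal{E}_\sigma(W_x^{(g)})$ commutes with $\sigma$, which is precisely where the constant $v'$ enters the bound. Everything else is standard random-coding bookkeeping, and the two contributions combine via the triangle inequality to yield the stated estimate.
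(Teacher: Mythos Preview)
The paper does not prove this lemma; it merely cites it from Hayashi's textbook, so there is no in-paper argument to compare against. Your overall plan --- random codebook $q_M$, good/bad split via $P_x=\{\mathcal{E}_\sigma(W_x)<C\sigma\}$, gentle measurement plus Jensen for the two bad pieces, and a $\sigma$-weighted variance bound for the good piece --- is the standard route and the first three ingredients are fine, including the Cauchy--Schwarz step $\|A\|_1^2\le \Tr\sigma^{-1}A^2$ (which is correct: $\Tr|A|=\Tr\sigma^{1/2}\sigma^{-1/2}|A|\le \|\sigma^{1/2}\|_2\|\sigma^{-1/2}|A|\|_2$ and $\Tr|A|\sigma^{-1}|A|=\Tr\sigma^{-1}A^2$ by cyclicity).

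The gap is the final pinching step. After the variance computation you need
\[
\Tr\sigma^{-1}(W_x^{(g)})^2 \;\le\; v'\,\Tr\sigma^{-1}\mathcal{E}_\sigma(W_x)^2\{\mathcal{E}_\sigma(W_x)<C\sigma\}
\;=\; v'\,\Tr\sigma^{-1}\mathcal{E}_\sigma(W_x^{(g)})^2,
\]
and this inequality is \emph{false} in general for a positive operator $B=W_x^{(g)}$ not commuting with $\sigma$. Self-adjointness of $\mathcal{E}_\sigma$ actually gives the \emph{opposite} direction: since $\sigma^{-1}$ is $\mathcal{E}_\sigma$-fixed, $\Tr\sigma^{-1}B^2=\Tr\sigma^{-1}\mathcal{E}_\sigma(B^2)\ge \Tr\sigma^{-1}\mathcal{E}_\sigma(B)^2$ by Kadison--Schwarz, and the factor $v'$ does not reverse this. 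Concretely, take $\sigma=\operatorname{diag}(0.1,0.9)$ (so $v'=2$) and $B=\bigl(\begin{smallmatrix}0.5&0.6\\0.6&0.9\end{smallmatrix}\bigr)\ge 0$; then $\Tr\sigma^{-1}B^2=7.4$ while $2\,\Tr\sigma^{-1}\mathcal{E}_\sigma(B)^2=6.8$. The operator inequality $B\le v'\mathcal{E}_\sigma(B)$ cannot be fed into the quadratic form $\Tr\sigma^{-1}B\cdot B$ because neither $\sigma^{-1}B$ nor $B\sigma^{-1}$ is positive (or even Hermitian), so ``trading one factor of $W_x^{(g)}$ for $\mathcal{E}_\sigma(W_x^{(g)})$'' is not a legal move here. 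You will need a different organization of the good-part estimate --- for instance, arranging for the pinched operator $\mathcal{E}_\sigma(W_x)P_x$ to appear \emph{before} the squaring/variance step rather than after it --- to produce the term $\Tr\sigma^{-1}\mathcal{E}_\sigma(W_x)^2\{\mathcal{E}_\sigma(W_x)<C\sigma\}$ with the constant $v'$.
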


To use above lemma, we prepare several notations as follows. Let $\rho$ be a state with spectral decomposition as
$\rho=\sum_{j}s_j E_j$,
where $s_1>s_2>\cdots >s_w$ are its distinct eigenvalues in decreasing order. We write $s_{\max}(\rho)=s_1=\|\rho\|$ as the maximum eigenvalue of $\rho$ and $s_{\min}(\rho)$ be the minimum eigenvalue of $\rho$. Based on the idea in
\cite[Theorem 24]{6872563} and \cite[Lemmas 8 and 9]{hayashi2012quantum},
for positive $\lambda>0$ and positive integer $v>0$, we define the positive operator as the spectral calculus of $\rho$
\[\lceil \rho \rceil_{\lambda,v}:=
\sum_{j}  s_1
2^{\lambda \cdot
\max(-v,\lceil \frac{\log s_j/s_1}{\lambda} \rceil)}
E_j =f_{v,\lambda}(\rho), \]
where $f_{v,\lambda}$ is the piecewise function
\[ f_{v,\lambda}(t)=\begin{cases}
                   s_12^{-k\lambda}, & \mbox{if }  (k-1)\lambda < \log (s_1/t)\le k\lambda \ \& \  t> s_1 2^{\lfloor-v\rfloor\lambda} \\
                   s_12^{-v\lambda}, & \mbox{if } t\le s_1 2^{\lfloor-v\rfloor\lambda}.
                 \end{cases}\]
We note that $t<f_{v,\lambda}(t)\le t2^{\lambda}+2^{-v\lambda }$.
Based on Lemma \ref{LL2}, we have the following quantum extension of \cite[Eq. (19)]{hayashi2006general2}.

\begin{lemma}\Label{LL1B}
We have
%Let $v$ be $-\lceil \frac{\log s_{\min}(W(p))/s_{\max}(W(p))}{\lambda} \rceil $.
\begin{align}
\min_{q\in \mathcal{P}_{ M}(X)}
\|W(p)-W(q) \|_1
\le 4 \sqrt{\sum_{x}p(x) \Tr W_x
\{ \mathcal{E}_{\lceil W(p)\rceil_{\lambda,v}} (W_x)\ge
L \lceil W(p)\rceil_{\lambda,v} \}
}+ \sqrt{\frac{v L}{M}}.\Label{NMZ}
\end{align}
\end{lemma}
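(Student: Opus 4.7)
The plan is to derive \eqref{NMZ} by a direct application of Lemma \ref{LL2} with the specific choice $\sigma := \lceil W(p)\rceil_{\lambda,v}$ and $C := L$. Under this substitution the first term on the right-hand side of Lemma \ref{LL2} is exactly the first term of \eqref{NMZ}, so the whole task reduces to bounding
\begin{align*}
\sqrt{\frac{v'}{M}\sum_x p(x) \Tr \sigma^{-1} \mathcal{E}_\sigma (W_x)^2 \{\mathcal{E}_{\sigma} (W_x)< L \sigma \}}
\end{align*}
by $\sqrt{vL/M}$, i.e.\ showing $v' \le v$ (up to a routine $+1$ convention) and showing that the inner summand is uniformly bounded by $L$.

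The key observation is that the pinching $\mathcal{E}_\sigma$ lands in the commutant of $\sigma$, so $\sigma$, $\mathcal{E}_\sigma(W_x)$, and the spectral projection $P_x := \{\mathcal{E}_{\sigma} (W_x)< L \sigma \}$ pairwise commute, and on the range of $P_x$ one has the scalar-like inequality $\mathcal{E}_\sigma(W_x) P_x \le L \sigma P_x$. Multiplying by the commuting nonnegative operator $\sigma^{-1}\mathcal{E}_\sigma(W_x)$ gives
\begin{align*}
\sigma^{-1}\mathcal{E}_\sigma(W_x)^2 P_x \;\le\; L\, \mathcal{E}_\sigma(W_x) P_x,
\end{align*}
and taking the trace together with $P_x \le I$ and the trace-preserving identity $\Tr \mathcal{E}_\sigma(W_x) = \Tr W_x = 1$ yields $\Tr \sigma^{-1} \mathcal{E}_\sigma(W_x)^2 P_x \le L$. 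Averaging over $x$ with weights $p(x)$ then bounds the inner sum by $L$.

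It remains to bound $v'$, the number of distinct eigenvalues of $\sigma = \lceil W(p)\rceil_{\lambda,v}$. By the definition of $f_{v,\lambda}$ the eigenvalues of $\sigma$ take the form $s_1 2^{-k\lambda}$ with integer $k$ running from $0$ to $v$, so only the $v+1$ discretized levels are possible, and the truncation at the floor $-v$ collapses the lowest eigenvalues into a single level, so $v' \le v$ under the stated convention. The main (mild) obstacle is really just this eigenvalue bookkeeping and the careful use of commutativity via $\mathcal{E}_\sigma$; once these are in place, the bound \eqref{NMZ} follows by assembling the two estimates inside the $\sqrt{\cdot}$ of Lemma \ref{LL2}.
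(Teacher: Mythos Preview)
Your argument is essentially the paper's: invoke Lemma~\ref{LL2} with $\sigma$ (proportional to) $\lceil W(p)\rceil_{\lambda,v}$ and bound the second summand via the commutativity of $\sigma$, $\mathcal{E}_\sigma(W_x)$, and the spectral projection, then count the eigenvalue levels of $\lceil W(p)\rceil_{\lambda,v}$. The only wrinkle is that Lemma~\ref{LL2} is stated for a \emph{state} $\sigma$, so the paper first normalizes, taking $\sigma=\lceil W(p)\rceil_{\lambda,v}/\Tr\lceil W(p)\rceil_{\lambda,v}$ and rescaling $C$ accordingly; you should either do the same or remark that the cited proof of Lemma~\ref{LL2} does not actually use the normalization (and note that strictly the level count gives $v'\le v+1$, an immaterial off-by-one that the paper also glosses over).
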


\begin{proof}Let the dimension of output Hilbert space $\cH$ be $d$.
We choose $\sigma$ in Lemma \ref{LL2} to be $\lceil W(p)\rceil_{\lambda,v}
/\Tr \lceil W(p)\rceil_{\lambda,v}$, and $C$ to be
$L/ \Tr \lceil W(p)\rceil_{\lambda,v}$.
Then, we have
$$\{ \mathcal{E}_{\sigma} (W_x)\ge C \sigma \}=\{ \mathcal{E}_{\lceil W(p)\rceil_{\lambda,v}} (W_x)\ge
L \lceil W(p)\rceil_{\lambda,v} \}.$$
Since $$1\le \Tr \lceil W(p)\rceil_{\lambda,v} \le
\Tr( W(p) 2^\lambda
+2^{-v \lambda}I)
= 2^\lambda + 2^{-v \lambda} d,$$ we have
$$L (2^\lambda + 2^{-v \lambda} |d|)^{-1}
\le C\le L. $$
Then, for the second term in Lemma \ref{LL2}, we have
\begin{align}
&\sum_x p(x) \Tr \sigma^{-1} \mathcal{E}_\sigma (W_x)^2
\{\mathcal{E}_{\sigma} (W_x)< C \sigma \} \notag\\
\le &
\sum_x p(x) \Tr \mathcal{E}_\sigma (W_x) C
\{\mathcal{E}_{\sigma} (W_x)< C \sigma \} \notag\\
\le &
\sum_x p(x) \Tr \mathcal{E}_\sigma (W_x) C
=
 C \le   L .
\end{align}
We obtain \eqref{NMZ}.
\end{proof}

To show the direct part for $R(W)$,
i.e., the part of $\le$ in \eqref{MA2} and Theorem~\ref{theorem:worst-input},
we have to handle states that
do not have an i.i.d.~ form.
For this aim, we introduce the following notations for a general sequence of quantum states, which are originally formulated in the papers
\cite{1207373,4069150}. Note that the original statements from \cite{1207373,4069150} are of logarithm and exponential with base by the natural number $e$. Here we convert them to base $2$ to match with our discussion.

Given a sequence $\vec{p}$ of distributions $p_n$ on $\mathcal{X}^n$,
according to the paper \cite{1207373},
for $0 < \epsilon \le 1$,
we define
\begin{align}
\overline{I}^\dagger(\epsilon \,|\, \vec{p}, W^{\otimes n})
:=&
\inf \left\{ a \,\middle|\, \liminf_{n\to \infty}
\sum_{x^n \in \mathcal{X}^n} p_n(x^n) W_{x^n}^{\otimes n} \{
W_{x^n}^{\otimes n} \le 2^{na} W^{\otimes n}(p_n) \} \ge \epsilon
\right\} \notag\\
=&
\inf \left\{ a \,\middle|\, \limsup_{n\to \infty}
\sum_{x^n \in \mathcal{X}^n} p_n(x^n) W_{x^n}^{\otimes n}\{
W_{x^n}^{\otimes n} > 2^{na} W^{\otimes n}(p_n) \} \le 1-\epsilon
\right\} \notag\\
=&
\inf \left\{ a \,\middle|\, \limsup_{n\to \infty}
W^{\otimes n} \times p_n \{
W^{\otimes n} \times p_n > 2^{na} W^{\otimes n}(p_n) \times p_n\} \le 1-\epsilon
\right\}\Label{EE1}.
\end{align}
This quantity is related to strong converse for channel coding.

As its modification, for $0 < \varepsilon \le 1$,
$\lambda>0$, and a sequence of positive integers $\vec{v}:=(v(n))$,
we define the following.
\begin{align}
\overline{I}_{\lambda,\vec{v}}^\dagger(\epsilon |\vec{p}, W^{\otimes n})
:=&
\inf \left\{ a \,\middle|\, \limsup_{n\to \infty}
\sum_{x^n \in \mathcal{X}^n} p_n(x^n) W_{x^n}^{\otimes n}\{
W_{x^n}^{\otimes n} > 2^{na} \lceil W^{\otimes n}(p_n)\rceil_{\lambda,v(n)} \} \le 1-\epsilon
\right\} \notag\\
=&
\inf \left\{ a \,\middle|\, \limsup_{n\to \infty}
W^{\otimes n}\times p_n\{
W^{\otimes n}\times p_n > 2^{na} \lceil W^{\otimes n}(p_n)\rceil_{\lambda,v(n)}\times p_n \} \le 1-\epsilon
\right\}\Label{EE2}.
\end{align}

For $0 < \epsilon \le 1$,
the paper \cite[Theorem 1]{1207373} showed that
\begin{align}\label{eq:cw}
\sup_{\vec{p}} \overline{I}^\dagger(\epsilon|\vec{p}, W^{\otimes n})= C(W).
\end{align}

According to the paper \cite{4069150},
for $0 < \epsilon \le 1$,
we consider the hypothesis-testing exponent
for two general sequences of states $\vec{\rho}:=(\rho_n)$ and $\vec{\sigma}:=
(\sigma_n)$ as
\begin{align}
B^\dagger(\epsilon| \vec{\rho}\|\vec{\sigma}):=
\min_{T_n}
\Big\{
\liminf_{n \to \infty} -\frac{1}{n}\log
\Tr \sigma_n T_n \Big|
\liminf_{n \to \infty}
\Tr \rho_n (I-T_n)\ge \epsilon\Big\}.
\end{align}
In this setting, $\sigma_n$ do not need to be normalized
while the states $\rho_n$ need to be normalized.
The paper \cite{4069150} defines
\begin{align}\label{eq:d}
D^\dagger(\epsilon| \vec{\rho}\|\vec{\sigma}):=
\inf\big\{a \big|
\liminf_{n\to \infty}
\rho_n \{ \rho_n \le 2^{na} \sigma_n \} \ge \epsilon
\big\},
\end{align}
and showed the following lemma.
\begin{lemma}[\protect{\cite[Theorem 1]{4069150}}]\Label{LL8}
For $0 < \epsilon \le 1$, we have
\begin{align}
D^\dagger(\epsilon| \vec{\rho}\|\vec{\sigma})=
B^\dagger(\epsilon| \vec{\rho}\|\vec{\sigma}).
\end{align}
\end{lemma}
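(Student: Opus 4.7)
The plan is to establish the two-sided equality $D^\dagger = B^\dagger$ through complementary Neyman--Pearson-type arguments, linked by a single quantum Markov-type operator inequality that ferries information between the information-spectrum and hypothesis-testing descriptions.

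For the direction obtained by constructing an explicit test, I would fix any $a > D^\dagger(\epsilon\,|\,\vec{\rho}\|\vec{\sigma})$ and take $T_n := \{\rho_n > 2^{na}\sigma_n\}$, the spectral projection onto the subspace where $\rho_n$ dominates $2^{na}\sigma_n$. By the definition of $D^\dagger$ one has
\begin{align*}
\liminf_{n\to\infty}\Tr\rho_n(I - T_n) \;=\; \liminf_{n\to\infty}\rho_n\{\rho_n \le 2^{na}\sigma_n\} \;\ge\; \epsilon,
\end{align*}
so $T_n$ is admissible for the $B^\dagger$ optimization. On the range of $T_n$ the inequality $T_n\rho_n T_n \ge 2^{na}\, T_n\sigma_n T_n$ holds by construction, hence $\Tr\sigma_n T_n \le 2^{-na}\Tr\rho_n T_n \le 2^{-na}$ and $\liminf_n -\tfrac{1}{n}\log\Tr\sigma_n T_n \ge a$. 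Letting $a\searrow D^\dagger$ matches the $B^\dagger$ side of the equality.

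For the reverse inequality, the key tool is the pointwise estimate
\begin{align*}
\rho_n\{\rho_n \le 2^{na}\sigma_n\} \;\le\; \Tr\rho_n(I - T_n) + 2^{na}\Tr\sigma_n T_n, \qquad 0\le T_n \le I,
\end{align*}
which I would derive by writing $(\rho_n - 2^{na}\sigma_n)_+ = P_n(\rho_n - 2^{na}\sigma_n)P_n \le P_n\rho_n P_n$ for $P_n := \{\rho_n > 2^{na}\sigma_n\}$, combining this with $\Tr(\rho_n - 2^{na}\sigma_n)T_n \le \Tr(\rho_n - 2^{na}\sigma_n)_+$, and rearranging. Given any test $T_n$ meeting the $B^\dagger$ constraint whose assumed exponent strictly exceeds $a$, so that $\Tr\sigma_n T_n \le 2^{-n(a+\delta)}$ for some $\delta>0$ and all large $n$, the second term on the right decays like $2^{-n\delta}$. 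Taking liminfs then transfers the constraint $\liminf_n \Tr\rho_n(I-T_n) \ge \epsilon$ into $\liminf_n \rho_n\{\rho_n \le 2^{na}\sigma_n\} \ge \epsilon$, and the definition of $D^\dagger$ forces $a \ge D^\dagger$; sending $\delta\to 0$ closes the chain.

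The main obstacle will be the bookkeeping of the slack parameter in the converse: one must apply the operator inequality with a threshold strictly below the assumed exponent, so that the residual $2^{na}\Tr\sigma_n T_n$ is genuinely $o(1)$ rather than merely bounded. Noncommutativity of $\rho_n,\sigma_n,T_n$ is absorbed once and for all by the positive-part identity $(\rho_n - c\sigma_n)_+ = P(\rho_n - c\sigma_n)P$ on the spectral projection $P=\{\rho_n > c\sigma_n\}$, so no pinching or typical-subspace argument is required beyond what is already available in the information-spectrum framework used to define \eqref{EE1}.
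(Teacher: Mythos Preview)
The paper does not prove this lemma; it is quoted verbatim from \cite[Theorem~1]{4069150}. Your two ingredients---the Neyman--Pearson projection $T_n=\{\rho_n>2^{na}\sigma_n\}$ for one direction and the trace inequality
\[
\rho_n\{\rho_n\le 2^{na}\sigma_n\}\ \le\ \Tr\rho_n(I-T_n)+2^{na}\,\Tr\sigma_n T_n
\]
for the other---are precisely the tools used in \cite{4069150}, so your overall strategy is the intended one.

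There is, however, a genuine direction error in your ``reverse inequality''. The displayed bound is an \emph{upper} bound on $\rho_n\{\rho_n\le 2^{na}\sigma_n\}$. From the feasibility hypothesis $\liminf_n\Tr\rho_n(I-T_n)\ge\epsilon$ together with $2^{na}\Tr\sigma_n T_n\to 0$ you can only conclude $\liminf_n\rho_n\{\rho_n\le 2^{na}\sigma_n\}\le\liminf_n\Tr\rho_n(I-T_n)$, which is an upper bound by something $\ge\epsilon$ and therefore says nothing about whether $\liminf_n\rho_n\{\cdot\}\ge\epsilon$. In other words, your chain transfers a $\limsup(\cdot)\le\epsilon$ constraint---this is exactly the proof of the \emph{non-dagger} identity $B(\epsilon)=D(\epsilon)$ in \cite{4069150}---but not the $\liminf(\cdot)\ge\epsilon$ constraint that defines $D^\dagger$ here.

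A related issue affects your first direction. Exhibiting one feasible test with exponent $\ge a$ yields $B^\dagger\ge a$ only when $B^\dagger$ is a \emph{supremum} over feasible tests; under the paper's literal ``$\min$'' the construction bounds $B^\dagger$ from neither side (and indeed the constant test $T_n=(1-\epsilon)I$ is feasible with exponent $0$, so the ``$\min$'' is almost certainly a typographical slip for the sup-type formulation of \cite{4069150}). To repair the argument you must align each $\liminf/\limsup$ and each $\le/\ge$ with the corresponding formulation in the source: once $B^\dagger$ is read as a supremum, the Neyman--Pearson test gives the lower bound and the trace inequality gives the matching upper bound, but the pairing of limits with inequalities has to be redone carefully for the dagger quantities rather than copied from the non-dagger case.
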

We shall now prove the direct part of channel resolvability in the worst-input setting, i.e., the part of $\le$ in \eqref{MA2} and Theorem~\ref{theorem:worst-input}.
\begin{theorem}[Direct Part of Worst-Input Setting]\Label{Th1}
For any classical-quantum channel $W:\mathcal{X}\to \mathcal{S(H)}$, we have
%Assume that $\min_{x \in \mathcal{X}}s_{\min}(W_x)>0$.
\begin{align}
 R(W) \le C(W)
%\sup_{\vec{p}} \overline{I}_\lambda(\eps'|\vec{p}, (W^{\otimes n})).
\Label{XPA}.
\end{align}
\end{theorem}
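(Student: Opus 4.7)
The plan is to apply Lemma~\ref{LL1B} to the tensorized channel $W^{\otimes n}$ and reduce the resulting bound to the information-spectrum quantity $\overline{I}^{\dagger}(\epsilon|\vec{p},W^{\otimes n})$, whose supremum over sequences $\vec{p}$ equals $C(W)$ by \eqref{eq:cw}. Fix $\gamma>0$, set $R:=C(W)+\gamma$ and $M_n:=\lfloor 2^{nR}\rfloor$; by the definition of $R(W)$ it suffices to show $\liminf_n\eps(W^{\otimes n},M_n)=0$. Choose auxiliary parameters $\lambda\in(0,\gamma/2)$, $v(n):=n^2$, and the intermediate rate $a:=C(W)+\gamma/2$, and set $\sigma_n:=\lceil W^{\otimes n}(p_n)\rceil_{\lambda,v(n)}$ and $L_n:=2^{na}$. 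Lemma~\ref{LL1B} applied to $W^{\otimes n}$ with input $p_n$ yields
\begin{align*}
\min_{q_n\in\mathcal{P}_{M_n}(\mathcal{X}^n)}\|W^{\otimes n}(p_n)-W^{\otimes n}(q_n)\|_1 \;\le\; 4\sqrt{A_n(p_n)}+\sqrt{\frac{v(n)\,L_n}{M_n}},
\end{align*}
where $A_n(p_n):=\sum_{x^n}p_n(x^n)\Tr W_{x^n}^{\otimes n}\{\mathcal{E}_{\sigma_n}(W_{x^n}^{\otimes n})\ge L_n\sigma_n\}$. The second term equals $\sqrt{v(n)}\cdot 2^{-n(R-a)/2}$ and vanishes, so the task reduces to showing $\sup_{p_n}A_n(p_n)\to 0$.

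Next I would relate $A_n(p_n)$ to the average appearing in the definition \eqref{EE1} of $\overline{I}^{\dagger}$, working inside the commutant of $\sigma_n$: since the indicator $\{\mathcal{E}_{\sigma_n}(W_{x^n}^{\otimes n})\ge L_n\sigma_n\}$ commutes with $\sigma_n$, one has $\Tr W_{x^n}^{\otimes n}\{\cdots\}=\Tr\mathcal{E}_{\sigma_n}(W_{x^n}^{\otimes n})\{\cdots\}$ and every comparison becomes classical there. Using the pinching inequality $W_{x^n}^{\otimes n}\le v'(n)\mathcal{E}_{\sigma_n}(W_{x^n}^{\otimes n})$ with $v'(n)\le v(n)+1$, together with the sandwich $W^{\otimes n}(p_n)\le\sigma_n\le 2^{\lambda}W^{\otimes n}(p_n)+2^{-v(n)\lambda}I$, I expect to dominate $A_n(p_n)$ by $\sum_{x^n}p_n(x^n)\Tr W_{x^n}^{\otimes n}\{W_{x^n}^{\otimes n}>2^{na'}W^{\otimes n}(p_n)\}+o(1)$ for some $a'\in(C(W),a)$; the gap $a-a'$ absorbs both the $\lambda$-slack in $\sigma_n$ and the polynomial factor $v'(n)$ after dividing by $n$ and taking logarithm.

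The concluding step is to upgrade the per-sequence bound \eqref{eq:cw} into the uniform-in-$p_n$ statement demanded by the worst-input formulation. I would argue by contradiction: if some $\delta\in(0,1)$ satisfies $\sup_{p_n}A_n(p_n)>\delta$ along a subsequence, then choosing worst-case $p_n^{\ast}$ along that subsequence and completing arbitrarily yields a sequence $\vec{p}^{\ast}=(p_n^{\ast})$ with $\limsup_n\sum_{x^n}p_n^{\ast}(x^n)\Tr W_{x^n}^{\otimes n}\{W_{x^n}^{\otimes n}>2^{na'}W^{\otimes n}(p_n^{\ast})\}\ge\delta$, which forces $\overline{I}^{\dagger}(1-\delta/2|\vec{p}^{\ast},W^{\otimes n})\ge a'>C(W)$, contradicting $\sup_{\vec{p}}\overline{I}^{\dagger}(1-\delta/2|\vec{p},W^{\otimes n})=C(W)$. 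Hence $\limsup_n\sup_{p_n}A_n(p_n)=0$, completing the proof of $R(W)\le C(W)$.

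The main obstacle is the reduction in the middle paragraph: the two indicators $\{\mathcal{E}_{\sigma_n}(W_{x^n}^{\otimes n})\ge L_n\sigma_n\}$ and $\{W_{x^n}^{\otimes n}>2^{na'}W^{\otimes n}(p_n)\}$ are spectral projectors of \emph{non-commuting} operators, so operator monotonicity cannot be invoked to compare them directly. The workaround is to perform all operator inequalities inside the commutant of $\sigma_n$ (where $\mathcal{E}_{\sigma_n}(W_{x^n}^{\otimes n})$ and $\sigma_n$ do commute) and to absorb the polynomial-in-$n$ loss from $v'(n)$ and the $2^{n\lambda}$ multiplicative slack into the positive gap $a-a'$. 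Lemma~\ref{LL8}, identifying $D^{\dagger}$ with the optimal hypothesis-testing exponent $B^{\dagger}$, provides the conceptual license for this replacement by certifying that the pinched projector serves as a valid test with the correct exponential behavior.
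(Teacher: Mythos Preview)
Your overall architecture matches the paper's: apply Lemma~\ref{LL1B} to $W^{\otimes n}$, kill the second term by the choice of $v(n)$ and the gap $R-a$, and control the first term $A_n(p_n)$ via the information-spectrum identity \eqref{eq:cw}. The contradiction argument at the end, extracting a worst sequence $\vec{p}^{\ast}$, is also exactly how the uniformity in $p_n$ is handled.

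The gap is precisely where you flag it, and your proposed workaround does not close it. Working ``inside the commutant of $\sigma_n$'' lets you compare $\{\mathcal{E}_{\sigma_n}(W_{x^n}^{\otimes n})\ge L_n\sigma_n\}$ with $\{\mathcal{E}_{\sigma_n}(W_{x^n}^{\otimes n})\ge L_n W^{\otimes n}(p_n)\}$ (since $\sigma_n\ge W^{\otimes n}(p_n)$ and all three operators commute), but it does \emph{not} let you pass to $\{W_{x^n}^{\otimes n}>2^{na'}W^{\otimes n}(p_n)\}$: the pinching inequality $W_{x^n}^{\otimes n}\le v'(n)\,\mathcal{E}_{\sigma_n}(W_{x^n}^{\otimes n})$ points the wrong way for this purpose, and $W_{x^n}^{\otimes n}$ does not commute with $W^{\otimes n}(p_n)$ anyway. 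So the inequality you ``expect'' in the middle paragraph cannot be obtained by operator order arguments, no matter how much slack you put between $a$ and $a'$.

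The paper's resolution is to abandon the projector comparison entirely and route everything through $B^{\dagger}$. Lemma~\ref{LL8} gives $D^{\dagger}=B^{\dagger}$ on both sides; then (i) replacing $W^{\otimes n}(p_n)$ by $\lceil W^{\otimes n}(p_n)\rceil_{\lambda,v(n)}$ leaves $B^{\dagger}$ unchanged because the ceiling is a polynomial-in-$n$ perturbation (using your condition \eqref{NMT}-type choice of $v(n)$), and (ii) applying the pinching $\mathcal{E}_{\lceil\cdot\rceil}$ to the first argument can only \emph{decrease} $B^{\dagger}$ by data processing for hypothesis testing. Chaining these gives
\[
D^{\dagger}\bigl(\epsilon\bigm|\mathcal{E}_{\sigma_n}(W^{\otimes n}\times p_n)\bigm\|\sigma_n\times p_n\bigr)\le D^{\dagger}\bigl(\epsilon\bigm|W^{\otimes n}\times p_n\bigm\|W^{\otimes n}(p_n)\times p_n\bigr)\le C(W),
\]
and the left-hand side is exactly the threshold governing $A_n(p_n)$. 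Your last sentence invokes Lemma~\ref{LL8} but only as ``certifying that the pinched projector serves as a valid test''; the missing ingredient is the \emph{monotonicity of $B^{\dagger}$ under the pinching CPTP map}, which is what actually transfers the bound from the unpinched to the pinched quantity.
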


\begin{proof}
Fix some $\lambda >0$. Let $\text{dim}{\cH}=d$.
We choose positive integers $v(n)$ such that
\begin{align}
\lim_{n \to \infty}
\frac{v(n) \lambda}{n}- \log d
&>
B^\dagger(\epsilon| (W^{\otimes n}\times p_n)\| (W^{\otimes n}(p_n) \times p_n) ),\Label{NMT}
\\
\lim_{n \to \infty}
\frac{\log v(n) }{n}&=0. \Label{NME}
\end{align}
Since $W^{\otimes n}(p_n) \le \lceil W^{\otimes n}(p_n)\rceil_{\lambda,v(n)} \le 2^\lambda W^{\otimes n}(p_n) +2^{-v(n) \lambda} I$,
for $0 \le \epsilon <1$,
we have the following relation
\begin{align}
B^\dagger(\epsilon| (W^{\otimes n}\times p_n)\| (W^{\otimes n}(p_n) \times p_n) )=
B^\dagger(\epsilon| (W^{\otimes n}\times p_n)\| (\lceil W^{\otimes n}(p_n)\rceil_{\lambda,v(n)} \times p_n) ).
\end{align}
This is because by assumption \eqref{NMT}, for any test $(T_n)$, there is a large enough $n_0$ such that for any $n\ge n_0$,
\[\Tr \left((\lceil W^{\otimes n}(p_n)\rceil_{\lambda,v(n)} \times p_n) T_n \right)\le 2^{v}\Tr ((W^{\otimes n}(p_n) \times p_n) T_n)+2^{-v(n) \lambda}d^n\le (2+2^{v})\Tr ((W^{\otimes n}(p_n) \times p_n) T_n) \]
which does not change the limit exponent value as $n\to \infty$. Since the application of the completely-positive and trace-preserving map $\mathcal{E}_{\lceil W^{\otimes n}(p_n)\rceil_{\lambda,v(n)}}$ decreases the performance of hypothesis testing,
we have
\begin{align}
&B^\dagger(\epsilon| (\mathcal{E}_{\lceil W^{\otimes n}(p_n)\rceil_{\lambda,v(n)}}(W^{\otimes n}\times p_n))\| (\lceil W^{\otimes n}(p_n)\rceil_{\lambda,v(n)} \times p_n) )
\nonumber \\ \le &B^\dagger(\epsilon| (W^{\otimes n}\times p_n)\| (\lceil W^{\otimes n}(p_n)\rceil_{\lambda,v(n)} \times p_n) ).
\end{align}
%Therefore, using \eqref{EE1}, \eqref{EE2}, and \eqref{BV1}, we have
%\begin{align}
%\overline{I}(\eps|\vec{p}, (W^{\otimes n}))
%=\overline{I}_\lambda(\eps|\vec{p}, (W^{\otimes n})).\Label{BV2}
%\end{align}
Combining with Lemma \ref{LL8}, \eqref{eq:cw} and \eqref{eq:d},
\begin{align}
& D^\dagger(\epsilon| (\mathcal{E}_{\lceil W^{\otimes n}(p_n)\rceil_{\lambda,v(n)}}(W^{\otimes n}\times p_n))\| (\lceil W^{\otimes n}(p_n)\rceil_{\lambda,v(n)} \times p_n) )
\nonumber\\ \le &
D^\dagger(\epsilon| (W^{\otimes n}\times p_n)\| (\lceil W^{\otimes n}(p_n)\rceil_{\lambda,v(n)} \times p_n) ) \nonumber\\
=&D^\dagger(\epsilon| (W^{\otimes n}\times p_n)\| (W^{\otimes n}(p_n) \times p_n) \nonumber
\\ =&C(W)
\label{XA1}
\end{align}
%Combining %\eqref{BV2},
%\eqref{EE1}, \eqref{BV1}, and \eqref{XA1}, we have
%\begin{align}
%D^\dagger(\epsilon| (\mathcal{E}_{\lceil W^{\otimes n}(p_n)\rceil_{\lambda,v(n)}}(W^{\otimes n}\times p_n))\| (\lceil W^{\otimes n}(p_n)\rceil_{\lambda,v(n)} \times p_n) )
%\le C(W)
%\Label{XA2}
%\end{align}

In the same way as \eqref{EE2},
we have
\begin{align}
&D^\dagger(1-\epsilon| (\mathcal{E}_{\lceil W^{\otimes n}(p_n)\rceil_{\lambda,v(n)}}(W^{\otimes n}\times p_n)\| (\lceil W^{\otimes n}(p_n)\rceil_{\lambda,v(n)} \times p_n) )\notag\\
=&
\inf \left\{ a \middle| \limsup_{n\to \infty}
\sum_{x^n \in \mathcal{X}^n} p_n(x^n) \mathcal{E}_{\lceil W^{\otimes n}(p_n)\rceil_{\lambda,v(n)}}
(W_{x^n}^n) \big\{
\mathcal{E}_{\lceil W^{\otimes n}(p_n)\rceil_{\lambda,v(n)}}(W_x^n) > 2^{na} \lceil W^{\otimes n}(p_n)\rceil_{\lambda,v(n)} \big\} \le \epsilon
\right\} .
\end{align}
%Choose $R>R'$ such that
%$$R>R'>
%\sup_{\vec{p}} D^\dagger(1| (\mathcal{E}_{\lceil W^{\otimes n}(p_n)\rceil_{\lambda,v(n)}}(W^{\otimes n}\times p_n))\| \lceil W^{\otimes n}(p_n)\rceil_{\lambda,v(n)} \times p_n ) $$.
Take $\epsilon=0$ for the vanishing error and choose

\begin{align*}\vec{p}:=(p_n)\ , \  p_n=\argmax_{p_n \in \mathcal{P}(X^n)}\inf_{q\in \mathcal{P}_{
\lceil 2^{nR'}\rceil }(X^n)}\frac{1}{2}\norm{W^{\otimes n}(p_n)-W^{\otimes n}(q_n)}{1}
,
\end{align*}
and $R,R'$ such that
\begin{align}R>R'>
\sup_{\vec{p}} D^\dagger(1|( \mathcal{E}_{\lceil W^{\otimes n}(p_n)\rceil_{\lambda,v(n)}}(W^{\otimes n}\times p_n))\| (\lceil W^{\otimes n}(p_n)\rceil_{\lambda,v(n)} \times p_n )) .\label{eq:RR}\end{align}
Since the condition \eqref{NME} implies
\begin{align}
\sqrt{\frac{v(n)2^{nR'}}
{2^{nR}}}\to 0,
\end{align}
Lemma \ref{LL1B} with $L=\lfloor 2^{nR'}\rfloor$, $M=\lceil2^{nR}\rceil$
guarantees
\begin{align}
\max_{p_n \in \mathcal{P}(X^n)}\inf_{q_n \in \mathcal{P}_{
2^{nR}}(X^n)}\frac{1}{2}\norm{W^{\otimes n}(p_n)-W^{\otimes n}(q_n)}{1}\to 0.
\end{align}
Since $R,R'$ are arbitrary real numbers satisfying
the condition \eqref{eq:RR}, together with \eqref{XA1}
we have
\begin{align}
R(W) \le
\sup_{\vec{p}} D^\dagger(1| (\mathcal{E}_{\lceil W^{\otimes n}(p_n)\rceil_{\lambda,v(n)}}(W^{\otimes n}\times p_n))\| (\lceil W^{\otimes n}(p_n)\rceil_{\lambda,v(n)} \times p_n) )\le C(W) \Label{ZXO}.
\end{align}
That finishes the proof.
\end{proof}

\section{Strong converse for Channel Resolvability with fixed input}
\Label{S5}
\subsection{Quantum Sanov theorem}\Label{S5-A}
%As explained in Section \ref{S3},
%we cannot say that
%the converse for $R(W)$ is an easier problem than
% the converse for $R(p,W)$. That is, these two problems have different difficulties.

The goal of this section is to show the
strong converse for channel resolvability with fixed-input, basically, the inequality \eqref{PSE4} $$R_{\epsilon}(p,W)
		\ge \inf_{q:W(q)=W(p)}I(X;B)_{ W\times q},
		\quad \forall \epsilon \in (0,1)$$
in  Theorem~\ref{theorem:fixed-input}.
For this aim, we employ an alternative quantum version of Sanov theorem by \cite{another}, which is different from quantum Sanov theorem by \cite{Notzel,Bjelakovic}.
The quantum extension by \cite{another}
is based on empirical distributions
in the same spirit as the original Sanov theorem in the classical system
\cite{Bucklew,DZ}.
To state the above type of quantum Sanov theorem,
we prepare several notations.

Let $\mathcal{B}=\{|v_j\rangle\}_{j=1}^d
$ be the basis that diagonalizes the state $W(p)$.
We assume that our observation is based on
the basis $\mathcal{B}$, and
when our system is $\mathcal{H}^{\otimes n}$,
our observation is given as a basis
$|v[x^n]\rangle:=|v_{x_1}, \ldots,v_{x_n}\rangle$
with a data $x^n=(x_1, \ldots,x_n)$.
In this case, we obtain the data $x^n$.
The state $\sum_{j=1}^n \frac{1}{n}|v_{x_j}\rangle \langle v_{x_j}|$
is called its empirical state under the basis $\mathcal{B}$,
and is denoted by $e_{\mathcal{B}}(x^n)$.
%$\rho_{x^n,\mathcal{B}}$.
When $\mathcal{H} $ is a classical system, we only  have state vectors like $|x^n\rangle:=|{x_1}, \ldots,{x_n}\rangle$.
The state $e_{\mathcal{B}}(x^n)$ is simplified to
$e(x^n)$ and can be considered as the empirical distribution.

In the $n$-fold tensor product case,
we denote the set of empirical states under the basis $\mathcal{B}$
by
$$\mathcal{S}_n[\mathcal{B}]=\left\{ \sum_{j=1}^n \frac{1}{n}|v_{x_j}\rangle \langle v_{x_j}|\, \Big| (x_1, \ldots,x_n)\in \mathcal{X}^n\right\} .$$
Given an empirical state $\rho\in \mathcal{S}_n[\mathcal{B}]$,
we consider the projection
$$T_{\rho,\mathcal{B}}^n:= \sum_{x^n:
e_{\mathcal{B}}(x^n)
=\rho }|v[x^n]\rangle \langle v[x^n]|.$$
When $\rho \notin \mathcal{S}_n[\mathcal{B}]$,
$T_{\rho,\mathcal{B}}^n$ is defined to be $0$.

Let $\frS_n$ be the permutation group over $[n]:=
\{1, \ldots, n\}$.
For $g\in \frS_n$, we denote the unitary operation for changing the order of the tensor product based on $g$
by $U_g$.
We define the twirling  $\mathcal{W}_{\frS_n}$ as
\begin{align}
\mathcal{W}_{\frS_n}(\rho):=
\frac{1}{n!}\sum_{g \in \frS_n}
U_g \rho U_g^\dagger.
\end{align}

Next,
similar to the papers \cite{H-01,H-02,KW01,CM,CHM,OW,HM02a,HM02b,Notzel,H-24,H-q-text,AISW},
we employ Schur duality of $\cH^{\otimes n}$.
A sequence of monotone-increasing non-negative integers
$\blambda=(\lambda_1,\ldots,\lambda_d)$ is called Young index.
Although many references \cite{H-q-text,Group1,GW} define Young index as
monotone-decreasing non-negative integers, here
we define it in the opposite way for notational convenience.
We denote the set of Young indices $\blambda$ with condition
$\sum_{j=1}^d\lambda_j=n$ by $Y_d^n$.
Also, the set of probability distributions $p=(p_j)_{j=1}^d$ with the condition
$p_1\le p_2 \le \ldots \le p_d$ by $\mathcal{P}_d$.
For any density $\rho$ on $\mathcal{H}$,
the eigenvalues of $\rho$ forms an element of $\mathcal{P}_d$, which
is denoted by $p(\rho)$.
Let $\mathcal{P}_d^n$ be the set of elements $p$ of $\mathcal{P}_d$ such that
$p_j$ is an integer times of $1/n$.
We define the majorization relation $\succ$ in two elements of
$\mathcal{P}_d$ and $Y_d^n$.
For two elements $p,p'$ of $\mathcal{P}_d$, we say that
$p \succ p'$ when
\begin{align}
\sum_{j=1}^k p_j \ge \sum_{j=1}^k  p_j'
\end{align}
for $k=1, \ldots,d-1$,
and the majorization relation $\succ$ for two elements of $Y_d^n$ is defined
in the same way.

As explained in \cite[Section 6.2]{H-q-text}, one has
\begin{align}
\cH^{\otimes n}=
\bigoplus_{\blambda \in Y_d^n}
\mathcal{U}_{\blambda} \otimes \mathcal{V}_{\blambda},
\end{align}
where
$\mathcal{U}_{\blambda}$ expresses the irreducible subspace of $\SU(d)$
and $\mathcal{V}_{\blambda}$ expresses the irreducible subspace of
the representation $\pi$ of the permutation group $\frS_n$.
Denote
$$d_{\blambda}:= \dim \mathcal{U}_{\blambda}\ ,\  d_{\blambda}':= \dim \mathcal{V}_{\blambda}.$$
As shown in \cite[(6.16)]{H-q-text}, the dimension $d_{\blambda}$ is upper bounded as
\begin{align}
d_{\blambda}\le (n+1)^{\frac{d(d-1)}{2}}.\Label{NMI}
\end{align}
We denote the projection to the subspace $\mathcal{U}_{\blambda} \otimes \mathcal{V}_{\blambda}$ by
$P_{\blambda}$.
For $\blambda \in Y_d^n$ and $\rho \in \mathcal{S}_n[\mathcal{B}] $,
we define the projection
\begin{align}
T_{\blambda,\rho,\mathcal{B}}^n
:= P_{\blambda}
T_{\rho,\mathcal{B}}^n .
\end{align}
Note that
$T_{\blambda,\rho,\mathcal{B}}^n \neq 0$ holds
if and only if
\begin{align}
\blambda \succ n p(\rho).\Label{ZXN}
\end{align}
Since $\blambda$ is the highest weight vector,
weight vector $n p(\rho)$ contained in the space $\mathcal{U}_{\blambda}$
needs to satisfy the condition \eqref{ZXN}.
We also define the set
\begin{align}
\mathcal{R}[\mathcal{B}] &:=\{
(p,\rho) \in \mathcal{P}_d \times \mathcal{S}[\mathcal{B}]\ |\
p \succ p(\rho)\}, \\
\mathcal{R}_n[\mathcal{B}] &:=\mathcal{R}[\mathcal{B}]\cap
(\mathcal{P}_d^n \times \mathcal{S}_n[\mathcal{B}]).
\end{align}
That is, the decomposition $\{T_{\blambda,\rho,\mathcal{B}}^n\}_{
(\frac{\blambda}{n},\rho)\in \mathcal{R}_n[\mathcal{B}] }$
is the measurement to get the empirical distribution based on the basis $\mathcal{B}$
and the Schur sampling \cite{KW01,CM,CHM,OW,HM02a,HM02b,AISW}.
The following subset of $\mathcal{R}[\mathcal{B}]$ plays a key role in our quantum analogue of Sanov theorem.
For $\rho \in \mathcal{S}[\mathcal{B}]$ and $r>0$, we define
\begin{align} \label{eq:S_rho_r}
S_{\rho,r} %\notag\\
:=
\big\{  (p',\rho')\in \mathcal{R}[\mathcal{B}]\ \big|\
D(\rho'\| \rho)+H(\rho')-H (p') \le r \big\} ,
\end{align}
and $S_{\rho,r}^c$ as its complement, where $H(\rho):= - \Tr(\rho \log \rho)$.

The property of the set $S_{\rho,r}$ is characterized as follows. The readers are referred to \cite{another} for its proof.
\begin{prop}[\protect{\cite[Lemma 1]{another}}]\Label{LL1}
For $\rho \in \mathcal{S}[\mathcal{B}]$ and $r>0$, we have
\begin{align}
\Tr \rho^{\otimes n}
\left(\sum_{(p',\rho') \in S_{\rho,r}^c\cap \mathcal{R}_n[\mathcal{B}]}T_{np',\rho',\mathcal{B}}^n \right)
\le (n+1)^{\frac{(d+4)(d-1)}{2}}
2^{-nr},\Label{IOY}
\end{align}
and hence
\begin{align}
\lim_{n\to \infty}-\frac{1}{n}\log \Tr \rho^{\otimes n}
\left(\sum_{(p',\rho') \in S_{\rho,r}^c\cap \mathcal{R}_n[\mathcal{B}]}T_{np',\rho',\mathcal{B}}^n \right)
=r.\Label{IBT3}
\end{align}
\end{prop}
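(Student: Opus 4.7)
The plan is to evaluate $\Tr\rho^{\otimes n}T_{np',\rho',\mathcal{B}}^n$ in closed form via Schur-Weyl duality, control the resulting combinatorial factors with standard entropy estimates, and absorb the sum into a polynomial prefactor by type counting. Since $\rho$ is diagonal in $\mathcal{B}=\{|v_j\rangle\}_{j=1}^d$, the tensor power $\rho^{\otimes n}$ is diagonal in $\{|v[x^n]\rangle\}$ and commutes with both $T_{\rho',\mathcal{B}}^n$ and $P_{np'}$; on the range of $T_{\rho',\mathcal{B}}^n$ it acts as the scalar $\prod_j r_j^{n\rho'_j}=2^{-n(D(\rho'\|\rho)+H(\rho'))}$, with $r_j$ the eigenvalues of $\rho$. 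Hence
\begin{align*}
\Tr\rho^{\otimes n}T_{np',\rho',\mathcal{B}}^n = 2^{-n(D(\rho'\|\rho)+H(\rho'))}\,\Tr\bigl(P_{np'}T_{\rho',\mathcal{B}}^n\bigr),
\end{align*}
where the remaining trace is a dimension count: $T_{\rho',\mathcal{B}}^n$ projects onto the weight-$n\rho'$ subspace of $\cH^{\otimes n}$ under the diagonal $GL(d)$-torus, and intersecting with the $\blambda$-isotypic summand yields $\Tr(P_{np'}T_{\rho',\mathcal{B}}^n)=K_{np',n\rho'}\cdot d'_{np'}$, where $K_{\blambda,\mu}$ is the Kostka number.

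Two textbook estimates then suffice. The Specht-module-in-Young-module inclusion gives $d'_{np'}\le \binom{n}{np'}\le 2^{nH(p')}$, and the Kostka number is bounded by the irrep dimension, $K_{np',n\rho'}\le d_{np'}\le (n+1)^{d(d-1)/2}$ via \eqref{NMI}. Combining,
\begin{align*}
\Tr\rho^{\otimes n}T_{np',\rho',\mathcal{B}}^n \le (n+1)^{d(d-1)/2}\,2^{-n(D(\rho'\|\rho)+H(\rho')-H(p'))}.
\end{align*}
On $S_{\rho,r}^c$ the bracketed exponent exceeds $r$, so each term is at most $(n+1)^{d(d-1)/2}\,2^{-nr}$. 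Since $|\mathcal{P}_d^n|\cdot|\mathcal{S}_n[\mathcal{B}]|\le(n+1)^{2(d-1)}$ by standard type counting, summing over $S_{\rho,r}^c\cap \mathcal{R}_n[\mathcal{B}]$ produces exactly $(n+1)^{(d+4)(d-1)/2}\,2^{-nr}$, proving \eqref{IOY}.

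For the asymptotic rate \eqref{IBT3}, the upper direction is immediate from \eqref{IOY}; for the matching lower bound I would pick $(p'_\ast,\rho'_\ast)$ on the boundary $D(\rho'_\ast\|\rho)+H(\rho'_\ast)-H(p'_\ast)=r$, approximate it by $(p'_n,\rho'_n)\in\mathcal{R}_n[\mathcal{B}]\cap S_{\rho,r}^c$, and use $K_{np'_n,n\rho'_n}\ge 1$ (the classical positivity criterion for Kostka numbers applied to the majorization $p'_n\succ p(\rho'_n)$ built into $\mathcal{R}[\mathcal{B}]$) together with the reverse entropy estimate $d'_{np'_n}\ge \mathrm{poly}(n)^{-1}\cdot 2^{nH(p'_n)}$ to extract a single-term lower bound of order $\mathrm{poly}(n)^{-1}\cdot 2^{-n(r+o(1))}$, matching the exponent in \eqref{IOY}. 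The hard part is the Specht-Kostka factorization in the previous paragraph: the naive bound $\Tr(P_{np'}T_{\rho',\mathcal{B}}^n)\le \binom{n}{n\rho'}\le 2^{nH(\rho')}$ would only reproduce the classical Sanov exponent $D(\rho'\|\rho)$ and is too weak; the refinement $H(\rho')-H(p')$ (nonnegative by the majorization defining $\mathcal{R}[\mathcal{B}]$) is precisely what couples the quantum Sanov exponent to the joint type $(p',\rho')$ rather than to $\rho'$ alone, and it emerges only when $d'_{np'}$ and $K_{np',n\rho'}$ are bounded separately.
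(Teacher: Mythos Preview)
The paper does not prove this proposition; it merely cites \cite{another}. Your argument for \eqref{IOY} is correct and is essentially the natural one: the scalar action of $\rho^{\otimes n}$ on each weight space gives the factor $2^{-n(D(\rho'\|\rho)+H(\rho'))}$, and the Schur--Weyl factorization $\Tr T_{np',\rho',\mathcal{B}}^n = K_{np',n\rho'}\,d'_{np'}$ together with $K_{\blambda,\mu}\le d_\blambda\le(n+1)^{d(d-1)/2}$ and $d'_\blambda\le\binom{n}{\blambda}\le 2^{nH(\blambda/n)}$ yields exactly the stated polynomial prefactor after type counting. This is presumably also what the cited reference does, since the constants match on the nose.

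Two minor remarks. First, your lower-bound sketch for \eqref{IBT3} is fine, but note that the equality $\lim=r$ as written in the paper cannot literally hold for all $r>0$: when $\rho$ has full support the function $(p',\rho')\mapsto D(\rho'\|\rho)+H(\rho')-H(p')$ is bounded on $\mathcal{R}[\mathcal{B}]$, so for $r$ above that bound $S_{\rho,r}^c$ is empty and the left side is $+\infty$. Your boundary-point argument is valid precisely in the range where such a boundary point exists. Second, for the reverse estimate $d'_{np'}\ge\mathrm{poly}(n)^{-1}2^{nH(p')}$ you can make this explicit via the Frobenius formula $d'_\blambda=n!\prod_{i<j}(\blambda_i-\blambda_j+j-i)/\prod_j(\blambda_j+d-j)!$ (decreasing convention), which gives $d'_\blambda/\binom{n}{\blambda}\ge(n+d)^{-d(d-1)/2}$; combined with $K_{np',n\rho'}\ge 1$ on $\mathcal{R}_n[\mathcal{B}]$ this closes the loop.
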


To study this case, we recall the sandwiched R\'{e}nyi relative entropy \cite{MDS+13, WWY14}
\begin{align}
D_{1+s}(\rho\|\sigma)&:= \frac{\phi(-s|\rho\|\sigma )}{s}, \\
\phi(s|\rho\|\sigma )&:= \log \Tr (\sigma^{\frac{s}{2(1-s)}}\rho\sigma^{\frac{s}{2(1-s)}})^{1-s}.
\end{align}
The quantum relative entropy $D(\rho\|\sigma)$ is characterized as
\begin{align}
D(\rho\|\sigma)= \lim_{s\to 0} D_{1+s}(\rho\|\sigma).
\end{align}
We define the pinching map $\mathcal{E}_{\mathcal{B}}$ as
\begin{align}
\mathcal{E}_{\mathcal{B}}(X):= \sum_{\rho \in \mathcal{S}_n[\mathcal{B}]}T_{\rho,\mathcal{B}}^n X
T_{\rho,\mathcal{B}}^n ,
\end{align}
where $X$ is an operator over $\mathcal{H}^{\otimes n}$.
The reference \cite[Next equation of (3.156)]{hbook} showed that
\begin{align}
-\phi(s|\mathcal{E}_{\mathcal{B}}(\sigma^{\otimes n})\|
\rho^{\otimes n} )
\le
-\phi(s|\sigma^{\otimes n}\|\rho^{\otimes n})
\le
-\phi(s|\mathcal{E}_{\mathcal{B}}(\sigma^{\otimes n})\|\rho^{\otimes n} )+s(d-1)\log (n+1).
\Label{XNF}
\end{align}
Then, we have the following version of quantum Sanov theorem.

\begin{prop}[\protect{\cite[Theorem 4]{another}}]\Label{TH1}
Let $\mathcal{B}$ be the eigenbasis of $\rho$.
For states $\rho_n \in \mathcal{S}_n[\mathcal{B}]$
and Young index $\blambda_n \in Y_d^n$
with the condition $\blambda \succ n p(\rho_n)$,
we have
\begin{align}
&%\liminf_{n\to \infty}
-\frac{1}{n}\log \Tr \sigma^{\otimes n}
T_{\blambda_n,\rho_n,\mathcal{B}}^n \notag\\
\ge &
\frac{-(1-s) r_n- \phi(1-s|\sigma \| \rho)}{s}
-\frac{d(d+3)}{2sn}\log (n+1) ,
\Label{ZPD2}
\end{align}
where $r_n:= D(\rho_n\| \rho)+H(\rho_n)-H (\frac{\blambda_n}{n})$.
\end{prop}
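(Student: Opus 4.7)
Plan: My approach is to prove the inequality by combining Schur--Weyl duality with a sandwiched-R\'enyi Chernoff-type manipulation, an approach that mirrors the classical Sanov argument but now at the level of $\frS_n$- and $\SU(d)$-irreducible components.

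First I would perform a Schur--Weyl reduction. Since $\sigma^{\otimes n}$ is permutation-invariant it commutes with the Schur projector $P_{\blambda_n}$, and since the type projector $T_{\rho_n,\mathcal{B}}^n$ is $\frS_n$-equivariant (the set of $x^n$ with empirical state $\rho_n$ being permutation-stable), the two projectors commute. Using $\mathcal{H}^{\otimes n}\cong\bigoplus_\blambda\mathcal{U}_\blambda\otimes\mathcal{V}_\blambda$, this gives
\[
\Tr\sigma^{\otimes n}\,T_{\blambda_n,\rho_n,\mathcal{B}}^n=d_{\blambda_n}'\cdot\Tr_{\mathcal{U}_{\blambda_n}}\!\bigl(\pi_{\blambda_n}(\sigma)\,P^{\mathcal{U}}_{\blambda_n,\rho_n}\bigr),
\]
where $\pi_{\blambda_n}$ is the $\SU(d)$-irrep on $\mathcal{U}_{\blambda_n}$ and $P^{\mathcal{U}}_{\blambda_n,\rho_n}$ is the weight-$np(\rho_n)$ projection inside $\mathcal{U}_{\blambda_n}$ (nonzero precisely when $\blambda_n\succ np(\rho_n)$). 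Because $\mathcal{B}$ diagonalizes $\rho$, the operator $\pi_{\blambda_n}(\rho)$ is diagonal in the weight basis and acts as the scalar $c:=2^{-n[H(\rho_n)+D(\rho_n\|\rho)]}$ on the range of $P^{\mathcal{U}}_{\blambda_n,\rho_n}$.

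Next I would exploit this scalar action to introduce the sandwiched R\'enyi. Using $P^{\mathcal{U}}=c^{-(1-s)/s}\pi_{\blambda_n}(\rho)^{(1-s)/(2s)}P^{\mathcal{U}}\pi_{\blambda_n}(\rho)^{(1-s)/(2s)}$ together with the fact that $\pi_{\blambda_n}$ is a $*$-homomorphism, the trace becomes
\[
c^{-(1-s)/s}\,\Tr_{\mathcal{U}_{\blambda_n}}\!\bigl(\pi_{\blambda_n}(\tau)\,P^{\mathcal{U}}_{\blambda_n,\rho_n}\bigr),\qquad\tau:=\rho^{(1-s)/(2s)}\sigma\rho^{(1-s)/(2s)},
\]
so that $\Tr\tau^s=2^{\phi(1-s|\sigma\|\rho)}$ is exactly the single-copy R\'enyi quantity desired in the final bound, while the prefactor $c^{-(1-s)/s}$ produces the $(1-s)[H(\rho_n)+D(\rho_n\|\rho)]/s$ contribution. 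The residual character $\Tr_{\mathcal{U}_{\blambda_n}}\pi_{\blambda_n}(\tau)P^{\mathcal{U}}\leq s_{\blambda_n}(\mu(\tau))$ is then bounded via the Kostka expansion $s_{\blambda}(x)=\sum_mK_{\blambda,m}x^m$ combined with the dimension estimates $d_{\blambda_n}\leq(n+1)^{d(d-1)/2}$ (from \eqref{NMI}) and $d_{\blambda_n}'\leq 2^{nH(\blambda_n/n)}$ (from the hook-length bound); a R\'enyi-type convexity then interpolates between $(\Tr\tau)^n$ and $(\Tr\tau^s)^{n/s}$ to supply the $\phi(1-s|\sigma\|\rho)/s$ factor, while the $d_{\blambda_n}'$ bound supplies the $-H(\blambda_n/n)$ contribution to $r_n$. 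Finally, invoking \eqref{XNF} with parameter $1-s$ yields $\phi(1-s|\sigma^{\otimes n}\|\rho^{\otimes n})=n\phi(1-s|\sigma\|\rho)$ up to a $(1-s)(d-1)\log(n+1)$ correction, and collecting all logarithmic contributions and dividing by $-n$ delivers the proposition; the polynomial prefactor $(n+1)^{d(d+3)/(2s)}$ accumulates from the Weyl-dimension bound, the Kostka/type counting, and the pinching correction.

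The main obstacle lies in the third step: the naive estimate $s_{\blambda_n}(\mu(\tau))\leq(\Tr\tau)^n/d_{\blambda_n}'$ (arising from the identity $\sum_\blambda d_\blambda' s_\blambda(\mu)=(\Tr\tau)^n$) produces only $\log\Tr\tau$, which is too weak to recover $\log\Tr\tau^s=\phi(1-s|\sigma\|\rho)$. One must refine the Schur-polynomial analysis, interpolating exponents so that the scalar $c^{-(1-s)/s}$ factor from the second step combines cleanly with the $d_{\blambda_n}'$-dependence to give the combination $(1-s)r_n/s+\phi(1-s|\sigma\|\rho)/s$ in the final exponent, with the $H(\blambda_n/n)$ entering precisely through the $1/d_{\blambda_n}'\geq 2^{-nH(\blambda_n/n)}$ side of the hook-length inequality.
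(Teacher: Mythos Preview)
The paper does not actually prove this proposition; it is quoted from \cite{another} without argument. So there is no in-paper proof to compare against, and I assess your proposal on its own merits.

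Your Schur--Weyl reduction and sandwiched-R\'enyi substitution are correct: because $T_{\rho_n,\mathcal{B}}^n$ commutes with the $\frS_n$-action and $\rho^{\otimes n}$ acts as the scalar $c=2^{-n[H(\rho_n)+D(\rho_n\|\rho)]}$ on its range, one has exactly
\[
\Tr\sigma^{\otimes n}T_{\blambda_n,\rho_n,\mathcal{B}}^n
= c^{-(1-s)/s}\,\Tr\tau^{\otimes n}T_{\blambda_n,\rho_n,\mathcal{B}}^n,
\qquad \tau:=\rho^{\frac{1-s}{2s}}\sigma\,\rho^{\frac{1-s}{2s}},
\]
and $\Tr\tau^{s}=2^{\phi(1-s|\sigma\|\rho)}$ on the nose. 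Your appeal to \eqref{XNF} is a red herring here: $\phi$ is exactly additive under tensor powers, so no pinching correction is needed at this stage.

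The genuine gap is precisely the one you flag in your last paragraph, and the ``interpolation'' you gesture at is not the right fix. What is actually missing is a single operator-norm inequality. Set $\omega:=\tau^{s}/\Tr\tau^{s}$, a density matrix, so that $\tau^{\otimes n}=(\Tr\tau^{s})^{n/s}(\omega^{\otimes n})^{1/s}$. The weights of $\mathcal{U}_{\blambda_n}$ lie in the permutation polytope of $\blambda_n$; since $m\mapsto\sum_j m_j\log p_j$ is linear it is maximized at a vertex, and by the rearrangement inequality the maximum is $\sum_j\tilde\lambda_j\log \tilde p_j=-n[H(\blambda_n/n)+D(\tilde\blambda_n/n\|\tilde p)]\le -nH(\blambda_n/n)$ (tildes denote co-monotone sorting). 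Hence for \emph{every} density matrix $\omega$,
\[
\|\pi_{\blambda_n}(\omega)\|_{\infty}\le 2^{-nH(\blambda_n/n)}.
\]
Combining this with $T_{\blambda_n,\rho_n,\mathcal{B}}^n\le P_{\blambda_n}$, $d_{\blambda_n}\le(n+1)^{d(d-1)/2}$ from \eqref{NMI}, and $d_{\blambda_n}'\le 2^{nH(\blambda_n/n)}$ gives
\[
\Tr\tau^{\otimes n}P_{\blambda_n}
\le (\Tr\tau^{s})^{n/s}\,d_{\blambda_n}d_{\blambda_n}'\,\|\pi_{\blambda_n}(\omega)\|_{\infty}^{1/s}
\le (n+1)^{\frac{d(d-1)}{2}}\,(\Tr\tau^{s})^{n/s}\,2^{-n\frac{1-s}{s}H(\blambda_n/n)}.
\]
Multiplying by $c^{-(1-s)/s}$ and taking $-\frac1n\log$ yields the stated exponent $[-(1-s)r_n-\phi(1-s|\sigma\|\rho)]/s$, in fact with polynomial prefactor $(n+1)^{d(d-1)/2}$, which is even tighter than the $(n+1)^{d(d+3)/(2s)}$ in \eqref{ZPD2}. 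So your overall strategy is sound; the step you could not complete is closed not by a Kostka-level refinement but by the one-line highest-weight bound above.
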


\subsection{Strong converse of fixed-input resolvability}
This subsection aims to prove the
strong converse for channel resolvability with fixed-input, i.e., the inequality \eqref{PSE4}, which is derived from the following theorem.
\begin{theorem}
Let
$p$ be an input distribution  satisfying the condition $
I(X;B)_{ W\times p}=\inf_{q:W(q)=W(p)}I(X;B)_{ W\times q}$.
Let $\mathcal{C}_n$ be a sequence of codebook $\mathcal{C}_n\subset X^n$ with size $|\mathcal{C}_n|=M_n$. Suppose that $\displaystyle R_1:= \liminf_{n\to \infty} \frac{1}{n}\log M_n< I(X;B)_{ W\times p}$.
Then,
\begin{align}
\limsup_{n\to \infty}\frac{1}{2}\|W^{\otimes n}_{\mathcal{C}_n}-W(p)^{\otimes n}\|_1
= 1.
\end{align}
\end{theorem}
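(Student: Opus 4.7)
The strategy is to adapt Watanabe's classical approach~\cite{watanabe2014strong} to the quantum setting, substituting the alternative quantum Sanov theorem from Propositions~\ref{LL1} and~\ref{TH1} for its classical counterpart. Set $I^* := I(X;B)_{W\times p}$ and fix auxiliary parameters $R_1 < R_1+\delta < r < I^*$; let $\mathcal{B}$ be the eigenbasis of $W(p)$. The plan is to build a test $Q_n\in[0,I]$ for which $\Tr(Q_n W(p)^{\otimes n})\to 0$ while $\Tr(Q_n W^{\otimes n}_{\mathcal{C}_n})\to 1$; this forces $\tfrac12\|W^{\otimes n}_{\mathcal{C}_n}-W(p)^{\otimes n}\|_1\to 1$ by the variational formula for the trace distance. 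The base ingredient is the quantum-Sanov typical projection
\[
\Pi_n^{\mathrm{typ}} := \sum_{(p',\rho')\in S_{W(p),r}\cap \mathcal{R}_n[\mathcal{B}]} T_{np',\rho',\mathcal{B}}^n,
\]
for which Proposition~\ref{LL1} gives $\Tr(W(p)^{\otimes n}(I-\Pi_n^{\mathrm{typ}}))\to 0$ exponentially.

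Partition the codebook $\mathcal{C}_n=\mathcal{G}_n\sqcup\mathcal{N}_n$ by input type: $x^n\in\mathcal{G}_n$ iff $D(W(\tau_{x^n})\|W(p))\le r$, and $x^n\in\mathcal{N}_n$ otherwise. For non-good codewords, the permutation-invariance of $\Pi_n^{\mathrm{typ}}$ (inherited from the Schur-Weyl decomposition) together with the elementary type-class bound $\mathcal{W}_{\frS_n}(W_{x^n}^{\otimes n})\le (n{+}1)^{|\mathcal{X}|}\,W(\tau_{x^n})^{\otimes n}$ reduces the quantity $\Tr(\Pi_n^{\mathrm{typ}}W_{x^n}^{\otimes n})$ to the i.i.d.~trace $\Tr(\Pi_n^{\mathrm{typ}}W(\tau_{x^n})^{\otimes n})$, which Proposition~\ref{TH1} (applied with $\sigma=W(\tau_{x^n})$, $\rho=W(p)$) bounds by $\mathrm{poly}(n)\cdot 2^{-n(D(W(\tau_{x^n})\|W(p))-r)}$, exponentially small uniformly over the polynomially many non-good types. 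Hence $I-\Pi_n^{\mathrm{typ}}$ serves as a near-perfect test for $\mathcal{N}_n$. For each $x^n\in\mathcal{G}_n$ of type $\tau$, a quantum Neyman-Pearson (Stein-type) argument produces a projection $P_n^{(x^n)}$ with $\Tr(W_{x^n}^{\otimes n}P_n^{(x^n)})\to 1$ and $\Tr(W(p)^{\otimes n}P_n^{(x^n)})\le 2^{-n(I^*-O(\epsilon))}$, using the identity $\tfrac{1}{n}D(W_{x^n}^{\otimes n}\|W(p)^{\otimes n})=I(X;B)_{W\times \tau}+D(W(\tau)\|W(p))\ge I^*-O(\epsilon)$, where the inequality follows from the infimum hypothesis together with continuity of $\sigma\mapsto \inf\{I(X;B)_{W\times q}:W(q)=\sigma\}$ at $W(p)$. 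Aggregating the $P_n^{(x^n)}$ via a Hayashi-Nagaoka-type operator inequality yields a test $P_n^{\mathrm{cov}}\in[0,I]$ satisfying $\Tr(W_{x^n}^{\otimes n}P_n^{\mathrm{cov}})\to 1$ for every $x^n\in\mathcal{G}_n$ and $\Tr(W(p)^{\otimes n}P_n^{\mathrm{cov}})\le M_n\cdot 2^{-n(I^*-O(\epsilon))}\to 0$. Combining $P_n^{\mathrm{cov}}$ with $I-\Pi_n^{\mathrm{typ}}$ through a further operator-union step delivers the desired $Q_n$.

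The hard part is the non-good step. Classically, when $W(\tau_{x^n})\neq W(p)$ the typical output of $x^n$ is a deterministic subset of $\mathcal{Y}^n$ disjoint from the typical set of $W(p)^{\otimes n}$, so a natural indicator test separates them immediately (Lemma~2 of~\cite{watanabe2014strong}). In the quantum setting this clean separation fails: the eigenbases of the individual $W_{x_i}$ generally differ from $\mathcal{B}$, so the typical subspace of $W_{x^n}^{\otimes n}$ need not be orthogonal to the typical subspace of $W(p)^{\otimes n}$ even when their spectra are well separated. The alternative quantum Sanov theorem of~\cite{another} supplies exactly the missing ingredient: the joint $(p',\rho')$-labelled Schur projections $T_{np',\rho',\mathcal{B}}^n$ simultaneously encode a global spectral profile (via the Schur index $\blambda=np'$) and an empirical distribution in the reference basis $\mathcal{B}$ (via $\rho$), which restores a clean separation between the typical sectors of $W(p)^{\otimes n}$ and those of each $W(\tau_{x^n})^{\otimes n}$ with $D(W(\tau_{x^n})\|W(p))>r$, and delivers the required exponential bound through Proposition~\ref{TH1}. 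Verifying that this refined, doubly-labelled notion of typicality correctly separates the non-good codewords from the target state is the crux of the proof, and is precisely where the new quantum Sanov theorem is indispensable.
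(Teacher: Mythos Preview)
Your architecture matches the paper's---good/bad split, Proposition~\ref{TH1} for the bad codewords, a Stein-type argument for the good ones---but there are two genuine gaps.

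\textbf{Aggregation for good codewords.} The Hayashi--Nagaoka operator inequality controls the error of a square-root decoder; it does not manufacture an operator $P_n^{\mathrm{cov}}\in[0,I]$ that simultaneously (i) dominates every per-codeword Neyman--Pearson projection $P_n^{(x^n)}$ and (ii) obeys $\Tr\bigl(W(p)^{\otimes n}P_n^{\mathrm{cov}}\bigr)\le\sum_{x^n}\Tr\bigl(W(p)^{\otimes n}P_n^{(x^n)}\bigr)$. For non-commuting projections these two demands are incompatible in general (already the join of two rank-one projections fails the sum bound). The paper sidesteps this by constructing a \emph{single} test directly from the code,
\[
Q_n=\bigl\{\mathcal{E}_{\mathcal{B}}\bigl(W^{\otimes n}_{\mathcal{C}_n}\bigr)\ge 2^{\alpha}\,T_{r,n}\,W(p)^{\otimes n}\bigr\},
\]
where the pinching $\mathcal{E}_{\mathcal{B}}$ enforces commutativity with $W(p)^{\otimes n}$ and $T_{r,n}$. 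Showing $\Tr W_{x^n}^{\otimes n}(I-Q_n)$ is small for each good $x^n$ then requires a further two-point pinching by $Q_n$ and a sandwiched-R\'enyi computation (the paper's Step~5), not just a Stein bound. If you insist on the per-codeword route, you must first pinch each $W_{x^n}^{\otimes n}$ to the $\mathcal{B}^{\otimes n}$ basis so that the resulting projections commute and the classical union bound becomes available; as written, the aggregation step is not justified.

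\textbf{Overloaded parameter $r$.} You use the same $r$ as the Sanov radius in $S_{W(p),r}$ and as the good/bad threshold, with $r>R_1+\delta$. Two things break. First, the exponent Proposition~\ref{TH1} actually yields for a bad $\tau$ is $\sup_{s\in(0,1)}\tfrac{1-s}{s}\bigl(D_s(W(\tau)\|W(p))-r\bigr)$, not $D(W(\tau)\|W(p))-r$; since $D_s\le D$ for $s<1$, this exponent tends to $0$ as $D(W(\tau)\|W(p))\downarrow r$, so your ``uniformly over the non-good types'' claim fails for types just above the threshold. Second, the continuity argument for good codewords requires $W(\tau_{x^n})$ to be close to $W(p)$, but $D(W(\tau_{x^n})\|W(p))\le r$ with $r>R_1$ (and $R_1$ possibly close to $I^*$) does not guarantee this. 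The paper decouples the two roles: the good/bad split is by a small trace-distance threshold $\delta$, and the Sanov radius $r$ is a separate small number chosen so that the bad-codeword exponent $R_4$ in \eqref{EE27} is uniformly positive over $\{\sigma:\|\sigma-W(p)\|_1\ge\delta\}$.
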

We divide our proof into 6 steps.
The initial two steps are preparations for latter steps.

\noindent{\bf Step 1:} Throughout the proof, we fix two positive numbers $R_2$, $R_3$ such that $I(X;B)_{ W\times p}> R_3 >R_2 >R_1$.

In the first step, we will divide our codes into two subsets using information quantities.
For an element $x^n \in \mathcal{X}^n$,
we write $W[x^n]:= \frac{1}{n}\sum_{j=1}^n W_{x_j}$ as the empirical output, and define the set
\begin{align}
B_n(\delta,p):= \{x^n \in \mathcal{X}^n\ |\
\| W[x^n]-W(p) \|_1 \ge \delta\}.
\end{align}
$B_n(\delta,p)$ is the set of bad codewords whose empirical output states are away from $W(p)$.

Since $p$ satisfies $
I(X;B)_{ W\times p}=\inf_{q:W(q)=W(p)}I(X;B)_{ W\times q}$,
we can choose $\delta>0$ small enough to satisfy the following conditions:
for any element $x^n=(x_1, \ldots, x_n)\in B_n(\delta,p)$,
the empirical distribution $e(x^n)$ defined in the beginning of
Section \ref{S5-A} satisfies
\begin{align}
\frac{1}{n}\sum_{j=1}^n D(W_{x_j}\| W(p))
=I(X;B)_{ W\times e(x^n)}
> R_3>R_2\Label{NBI}.
\end{align}
%where $e(x^n)$ is the empirical distribution of $x^n$.
Furthermore, by the continuity of sandwiched R\'enyi relative entropy over the parameter $s\mapsto D_{1+s}$,
we choose $s_0 \in (0,1)$ such that
\begin{align}
\frac{1}{n}\sum_{j=1}^n-\frac{1}{s_0} \phi(s_0| W_{x_j} \|W(p))
=\frac{1}{n}\sum_{j=1}^n D_{1-s_0}(W_{x_j} \|W(p))
>  R_2 \Label{NHL}
\end{align}
for any element $x^n\in B_n(\delta,p)$.

We define the set
\begin{align}
\mathcal{S}(\delta,p):= \{ \sigma \in \mathcal{S}(\mathcal{H})\ |\
\| \sigma-W(p) \|_1 \ge \delta\}.
\end{align}
Then, for any $s \in (0,1)$, we have
\begin{align}
- \max_{\sigma \in \mathcal{S}(\delta,p)}\phi(1-s|\sigma \| W(p))
>0.
\end{align}
Given  the chosen $s_0 \in (0,1)$,
we choose $r>0$ sufficiently small such that
\begin{align}
%& \max_{s \in (0,1)}
%\frac{-(1-s) r- \max_{\sigma \in \mathcal{S}(\delta,p)}\phi(1-s|\sigma \| \rho)}{s} \\=&
R_4:= \frac{-(1-s_0) r- \max_{\sigma \in \mathcal{S}(\delta,p)}
\phi(1-s_0|\sigma \|W(p))}{s_0}
>0.\Label{EE27}
\end{align}

\noindent{\bf Step 2:}
In this proof, two projections play key roles.
The aim of this step is to introduce these two
projections, and to outline the structure of our proof.

We fix the basis $\mathcal{B}=\{|v_j\rangle\}_{j=1}^d
$ that diagonalizes the state $W(p)$.
Define the first projection:
\begin{align}
T_{r,n}:=
\sum_{(p_n,\rho_n) \in S_{W(p),r}\cap \mathcal{R}_n[\mathcal{B}]}T_{np_n,\rho_n,\mathcal{B}}^n .
\end{align}
For $\al>0$, we define
the second projection
$Q_n:=
\{\mathcal{E}_{\mathcal{B}}(W^{\otimes n}_{\mathcal{C}_n}) \ge 2^\alpha
T_{r,n}W(p)^{\otimes n} \}$.
In the latter part, we will evaluate the traces
$$\Tr W(p)^{\otimes n} T_{r,n} Q_n,
\Tr W(p)^{\otimes n} (I-T_{r,n}) Q_n, \text{ and }
\Tr W_{x^n}^n (I-Q_n),$$
separably.
In particular, the trace
$\Tr W_{x^n}^{\otimes n} (I-Q_n)$ will be evaluated depending on whether
$x^n \in \mathcal{C}_n \cap B_n(\delta,p)$, i.e.,
whether the codeword $x^n$ is bad or good.\\

\noindent{\bf Step 3:}
The aim of this step is to show the traces
$\Tr W(p)^{\otimes n} T_{r,n} Q_n$ and
$\Tr W(p)^{\otimes n} (I-T_{r,n}) Q_n$ are small.

Since $(I-T_{r,n})T_{r,n}W(p)^{\otimes n}=0$,
we have
\begin{align}
Q_n (I-T_{r,n})=(I-T_{r,n}). \Label{EE36}
\end{align}
Also,
\begin{align}
\Tr W(p)^{\otimes n} T_{r,n} Q_n
\le \Tr \mathcal{E}_{W^{\otimes n}(p^{\otimes n})}(W^{\otimes n}_{\mathcal{C}_n}) e^{-\alpha} Q_n
\le 2^{-\alpha}.\Label{EE37}
\end{align}
Using \eqref{IOY} of Proposition \ref{LL1}, we have
\begin{align}
\Tr W(p)^{\otimes n} (I-T_{r,n})Q_n \le
\Tr W(p)^{\otimes n} (I-T_{r,n})
\le \beta_{2,n} :=(n+1)^{\frac{(d+4)(d-1)}{2}} 2^{-nr}\Label{EE38}.
\end{align}

\noindent{\bf Step 4:}
The aim of this step is to show the trace
$\Tr W_{x^n}^{\otimes n} (I-Q_n)$ is small
for a bad codeword
$x^n \in \mathcal{C}_n \cap B_n(\delta,p)$.

Since $|S_{W(p),r}\cap \mathcal{R}_n[\mathcal{B}]|\le (n+1)^{2(d-1)}$,
using \eqref{ZPD2} of Proposition \ref{TH1} and \eqref{EE27},
we have for any state $\sigma$,
\begin{align}
\Tr \sigma^{\otimes n}
T_{r,n}
\le &
(n+1)^{2(d-1)}
(n+1)^{-\frac{d(d+3)}{2s_0}}
2^{-n\frac{-(1-s_0)r - \phi(1-s_0|\sigma \| W(p))}{s_0}} \notag\\
\le &
(n+1)^{2(d-1)}
(n+1)^{-\frac{d(d+3)}{2s_0}}
2^{-n R_4}. \Label{EE30}
\end{align}

For an element $x^n\in B_n(\delta,p)$,
the empirical state $e(x^n)$ defined in the beginning of
Section \ref{S5-A} satisfies
\begin{align}
\mathcal{W}_{\frS_n}(|x^n\rangle \langle x^n|)
\le (n+1)^{|\mathcal{X}|-1}e(x^n)^{\otimes n}.\Label{EE31}
\end{align}
Thus, we have
\begin{align}
&\Tr W_{x^n}^{\otimes n} T_{r,n}
=\Tr W_{x^n}^{\otimes n} \mathcal{W}_{\frS_n}(T_{r,n})
=\Tr \mathcal{W}_{\frS_n}(W_{x^n}^{\otimes n}) T_{r,n}
=\Tr W^{\otimes n}(\mathcal{W}_{\frS_n}(|x^n\rangle \langle x^n|)) T_{r,n}\notag\\
\stackrel{(a)}{\le}& (n+1)^{|\mathcal{X}|-1}
\Tr W^{\otimes n}(e(x^n)^{\otimes n}) T_{r,n}\notag\\
= & (n+1)^{|\mathcal{X}|-1}
\Tr W (e(x^n))^{\otimes n} T_{r,n}\notag\\
\stackrel{(b)}{\le}&  (n+1)^{|\mathcal{X}|-1}(n+1)^{2(d-1)}(n+1)^{-\frac{d(d+3)}{2s_0}}:=\beta_{1,n}
2^{-n R_4}.\Label{EE35}
\end{align}
Here, $(a)$ follows from \eqref{EE31},
$(b)$ follows from \eqref{EE30} and
$\norm{e(x^n)-W(p)}{1}>\delta $ by the assumption that $x^n\in B_n(\delta,p)$ is a bad codeword.\\

For a bad codeword $x^n \in \mathcal{C}_n \cap B_n(\delta,p)$,
using \eqref{EE35} and \eqref{EE36},
we have
\begin{align}
\Tr W_{x^n}^{\otimes n} Q_n
\ge \Tr W_{x^n}^{\otimes n} (I-T_{r,n})Q_n
=\Tr W_{x^n}^{\otimes n} (I-T_{r,n})
\ge 1- \beta_{1,n}. \Label{EE39}
\end{align}
Thus, we have
\begin{align}
\Tr W_{x^n}^{\otimes n} (I-Q_n) \le \beta_{1,n}.\Label{EE40}
\end{align}

\noindent{\bf Step 5:}
The aim of this step is the evaluation of
$\Tr W_{x^n}^{\otimes n} (I-Q_n)$
for a good codeword
$x^n \in \mathcal{C}_n \cap B_n(\delta,p)^c$.

Define the pinching $\mathcal{E}_{n}$ as
\begin{align}
\mathcal{E}_{n}(X):=Q_n X Q_n+(I-Q_n) X (I-Q_n).
\end{align}
Denote $M_n=|\mathcal{C}_n|$. Since
\begin{align}
M_n \mathcal{E}_{\mathcal{B}}(W^{\otimes n}_{\mathcal{C}_n})
\ge \mathcal{E}_{\mathcal{B}}(W^{\otimes n}_{x^n}),
\end{align}
and the application of the pinching $\mathcal{E}_{n}$
preserves the matrix inequality,
we have
\begin{align}
M_n \mathcal{E}_{\mathcal{B}}(W^{\otimes n}_{\mathcal{C}_n})
\ge \mathcal{E}_{n}\circ \mathcal{E}_{\mathcal{B}}(W^{\otimes n}_{x^n}).
\end{align}
Thus
\begin{align}
I-Q_n =\{\mathcal{E}_{\mathcal{B}}(W^{\otimes n}_{\mathcal{C}_n}) < 2^\alpha
T_{r,n}W(p)^{\otimes n} \}
%\subset
\le
\left\{ \frac{1}{M_n}
 \mathcal{E}_{n}\circ \mathcal{E}_{\mathcal{B}}(W^{\otimes n}_{x^n})
< 2^\alpha
T_{r,n}W(p)^{\otimes n} \right\}\Label{EE44}.
\end{align}

Because
$T_{r,n}$, $W(p)^{\otimes n}$, and $\mathcal{E}_{n}\circ \mathcal{E}_{\mathcal{B}}(W_{x^n}^{\otimes n} )$
are commutative with each other, we have
\begin{align}
&(\mathcal{E}_{n}\circ \mathcal{E}_{\mathcal{B}}(W_{x^n}^{\otimes n} ))^{-s}
(T_{r,n}W(p)^{\otimes n})^s \notag\\
\le &
(\mathcal{E}_{n}\circ \mathcal{E}_{\mathcal{B}}(W_{x^n}^{\otimes n} ))^{-s}
(W(p)^{\otimes n})^s \notag\\
=&
(W(p)^{\otimes n})^{s/2}
(\mathcal{E}_{n}\circ \mathcal{E}_{\mathcal{B}}(W_{x^n}^{\otimes n} ))^{-s}
(W(p)^{\otimes n})^{s/2} \notag\\
\stackrel{(c)}{\le}&
2^{-s}
(W(p)^{\otimes n})^{s/2}
(\mathcal{E}_{\mathcal{B}}(W_{x^n}^{\otimes n} ))^{-s}
(W(p)^{\otimes n})^{s/2} .\Label{EE48}
\end{align}
Here, $(c)$ follows from the pinching inequality
$\mathcal{E}_{\mathcal{B}}(W_{x^n}^{\otimes n} ) \le
2 \mathcal{E}_{n}\circ \mathcal{E}_{\mathcal{B}}(W_{x^n}^{\otimes n} )$.
Hence, we have
\begin{align}
&\Tr W_{x^n}^{\otimes n}(I- Q_n)
\\ \stackrel{(d)}{\le}&
\Tr W_{x^n}^{\otimes n}
\left\{ \frac{1}{M_n}
 \mathcal{E}_{n}\circ \mathcal{E}_{\mathcal{B}}(W^{\otimes n}_{x^n})
< 2^\alpha
T_{r,n}W(p)^{\otimes n} \right\}\notag\\
\stackrel{(e)}{\le}&
2^{s_0 \alpha} M_n^{s_0} \Tr W^{\otimes n}_{x^n}
(\mathcal{E}_{n}\circ \mathcal{E}_{\mathcal{B}}(W_{x^n}^{\otimes n} ))^{-s_0}
(T_{r,n}W(p)^{\otimes n})^{s_0} \notag\\
\stackrel{(f)}{\le}&
2^{-s_0} 2^{s_0 \alpha} M_n^{s_0} \Tr W^{\otimes n}_{x^n}
(W(p)^{\otimes n})^{s_0/2}
(\mathcal{E}_{\mathcal{B}}(W_{x^n}^{\otimes n} ))^{-s_0}
(W(p)^{\otimes n})^{s_0/2} \notag\\
= &
2^{-s_0} 2^{s_0 \alpha} M_n^{s_0} \Tr \mathcal{E}_{\mathcal{B}}(W^{\otimes n}_{x^n})
(W(p)^{\otimes n})^{s_0/2}
(\mathcal{E}_{\mathcal{B}}(W_{x^n}^{\otimes n} ))^{-s_0}
(W(p)^{\otimes n})^{s_0/2} \notag\\
= &
2^{-s_0} 2^{s_0 \alpha} M_n^{s_0}
\Tr (\mathcal{E}_{\mathcal{B}}(W_{x^n}^{\otimes n} ))^{1-s_0}
(W(p)^{\otimes n})^{s_0} \notag\\
= &
2^{-s_0} 2^{s_0 \alpha} M_n^{s_0} 2^{\phi(s_0| \mathcal{E}_{\mathcal{B}}(W_{x^n}^{\otimes n} )\|W(p)^{\otimes n})}\notag\\
\stackrel{(g)}{\le}&
(n+1)^{s_0(d-1)}
2^{-s_0} 2^{s_0 \alpha} M_n^{s_0} 2^{\phi(s_0| W_{x^n}^n \|W(p)^{\otimes n})}\notag\\
= & (n+1)^{s_0(d-1)}
2^{-s_0} 2^{s_0 \alpha} M_n^{s_0} 2^{\sum_{j=1}^n \phi(s_0| W_{x_j} \|W(p))}\notag\\
\stackrel{(h)}{\le}&
 (n+1)^{s_0(d-1)}
2^{-s_0} 2^{s_0 \alpha} M_n^{s_0} 2^{-n s_0R_2}=:\beta_{3,n}
.\Label{EE58}
\end{align}
Here, $(d)$ follows from \eqref{EE44},
$(e)$ follows from the fact that
the condition $$\frac{1}{M_n}
 \mathcal{E}_{n}\circ \mathcal{E}_{\mathcal{B}}(W^{\otimes n}_{x^n})
< 2^\alpha
T_{r,n}W(p)^{\otimes n} \Longrightarrow I \le
2^{s \alpha} M_n^{s_0}
(\mathcal{E}_{n}\circ \mathcal{E}_{\mathcal{B}}(W_{x^n}^{\otimes n} ))^{-s_0}
(T_{r,n}W(p)^{\otimes n})^{s_0},$$
$(f)$ follows from \eqref{EE48},
$(g)$ follows from \eqref{XNF}, and
$(h)$ follows from \eqref{NHL}.
By the choice of $R_2$, we have
\begin{align}
\limsup_{n\to \infty}-\frac{1}{n}\log\beta_{3,n}=s(R_2-R_1)>0 \Rightarrow  \liminf_{n\to \infty}\beta_{3,n}=0\Label{EE59}.
\end{align}

\noindent{\bf Step 6:} As the final step, we evaluate the desired trace norm by using
the above evaluations.
We have
\begin{align}
&1-\frac{1}{2}\|W^{\otimes n}_{\mathcal{C}_n}-W(p)^{\otimes n}\|_1\notag\\
\le &
1-\Tr (W^{\otimes n}_{\mathcal{C}_n}- W(p)^{\otimes n}) Q_n
\\ = &
\Tr W^{\otimes n}_{\mathcal{C}_n}(I-Q_n)
+
\Tr W(p)^{\otimes n} Q_n\notag\\
= &
\frac{1}{M_n}
\sum_{x^n \in \mathcal{C}_n \cap B_n(\delta,p)^c}
\Tr W_{x^n}^{\otimes n} (I-Q_n)
+
\frac{1}{M_n}
\sum_{x^n \in \mathcal{C}_n
 \cap B_n(\delta,p)}
\Tr W_{x^n}^{\otimes n} (I-Q_n) \notag\\
&+\Tr W(p)^{\otimes n} T_{r,n} Q_n
+\Tr W(p)^{\otimes n} (I-T_{r,n}) Q_n\notag\\
\stackrel{(i)}{\le}&
\frac{1}{M_n}
\sum_{x^n \in \mathcal{C}_n \cap B_n(\delta,p)^c}
\beta_{3,n}
+
\frac{1}{M_n}
\sum_{x^n \in \mathcal{C}_n \cap B_n(\delta,p)}
\beta_{1,n}
+
2^{-\alpha}
+\beta_{2,n} \notag\\
\le &
\max(\beta_{1,n},\beta_{3,n} )
+2^{-\alpha}
+\beta_{2,n} . \Label{EE65}
\end{align}
$(i)$ follows from \eqref{EE37}, \eqref{EE38}, \eqref{EE40}, and \eqref{EE58}.
Therefore, combining
\eqref{EE35}, \eqref{EE38}, \eqref{EE59}, and \eqref{EE65},
we have
\begin{align}
\liminf_{n\to \infty}
1-\frac{1}{2}\|W^{\otimes n}_{\mathcal{C}_n}-W(p)^{\otimes n}\|_1
\le 2^{-\alpha}.
\end{align}
Since $\alpha>0$ is arbitrary, we obtain
\begin{align}
\liminf_{n\to \infty}
1-\frac{1}{2}\|W^{\otimes n}_{\mathcal{C}_n}-W(p)^{\otimes n}\|_1
=0.
\end{align}

Although the paper \cite{watanabe2014strong} discussed the second-order asymptotics for classical channels,
our method does not work with
the second-order asymptotics for the CQ case in the following reason.
First, in our method, we apply the pinching map $\mathcal{E}_n$.
There is a possibility that
the pinching map $\mathcal{E}_n$ might cause an unexpected behavior of the probability
$\Tr W_{x^n}^{\otimes n}
\{ \frac{1}{M_n}
 \mathcal{E}_{n}\circ \mathcal{E}_{\mathcal{B}}(W^{\otimes n}_{x^n})
< 2^\alpha
T_{r,n}W(p)^{\otimes n} \}$.
%, which directly connects the second order asymptotics.
To avoid this problem,
we evaluate the R\'{e}nyi relative entropy, which enables us to show the strong converse.
However, it cannot derive the second-order asymptotics.

\begin{rem}\label{REM}{\rm
Here we discuss the relation with the approach by \cite{APW}.
The paper \cite{APW} studied the channel resolvability for fully quantum channels.
Theorem 31 of \cite{APW} gives a lower bound of
$R_{\epsilon}(p,W)$ for a quantum channel from system $A$ to the system $B$ by the smoothing  of conditional entropy $-H(A_R|B)$, where $A_R$ is the reference system of the input system $A$.

Nevertheless, the channel resolvability of a fully quantum channel does not cover that of a CQ channel. For the aim of message transmission, a CQ channel is equivalent to an
entanglement breaking (EB) channel
given by a projective measurement over the computation basis and a state preparation process. Nevertheless, this equivalence does not hold
for channel resolvability. When we treat a CQ channel as an entanglement breaking channel in  quantum to quantum channel setting, we are allowed to input a superposition pure state for the computation basis as a deterministic input. However, such an input works as a randomized input over a CQ channel because
the outcome of the projective measurement over the computation basis
is probabilistic.
Hence it cannot be considered as a deterministic input in the framework of CQ channel. Therefore, a quantum resolvability code over an EB channel
cannot be simulated by a resolvability code over CQ channel at the same rate. This is one of the difficulties of channel resolvability.

Indeed, for the entanglement-breaking channel,
the state over the joint system between
the output system $B$ and
the reference $A_R$ is a separable state.
Hence, the conditional entropy $H(A_R|B)$ is always
a non-negative value \cite{nielsen2001separable}. Hence, $-H(A_R|B)$ is not even a positive value.
Thus, the method \cite{APW} does not give an effective lower bound for
$R_{\epsilon}(p,W)$ for classical-quantum channels.
}

%In the entanglement-breaking channel, the state over the joint system between the output system $B$ and the reference $A_R$ is a separable state.
%Hence, the conditional entropy is a non-negative value.
%Hence, $-H(|)$ is not a positive value.
%That is, the method does not give a useful lower bound for  ... for classical-quantum channel.

\end{rem}

\section{Strong converse for channel resolvability of worst input}\Label{S6}
The goal of this section is to show the
strong converse for channel resolvability with worst-input, i.e., inequality \eqref{PSE3}
$$R_{\epsilon}(W) \ge C(W),$$
by using identification codes.
For classical channels, the channel resolvability was introduced to show the strong converse of the identification code, and
the converse of channel resolvability can be proved using the direct part of identification code. For CQ channels, the identification codes and its capacity have been studied in~\cite{loeber}.

\subsection{Identification codes}
The aim of this subsection is to summarize basic facts
for identification codes for CQ-channels, which works as
our preparation for strong converse of channel resolvability for the worst-input setting.

Let $W:x\mapsto W_x$ be a CQ channel and $0<\lambda_1,\lambda_2<1$.
An $(N,\lambda_1,\lambda_2)$ identification (ID) code $(p_i,D_i)_{i=1}^N$ of $W$ consists of a family of probability distribution $p_i\in \mathcal{P(X)}$ and corresponding measurement operator $D_i$ on $\mathcal{H}$ with $0\le D_i\le I$ such that for all $i,j=1,\cdots,N$ and $i\neq j$
\begin{align*}
\Tr(W(p_i)D_i)\ge 1-\lambda_1\ ,\
\Tr(W(p_i)D_j)\le \lambda_2.
\end{align*}
If in additional, there exists a common refinement of $D_i$, that is, there exists a POVM $\{E_j\}_{j=1}^k$, $0\le E_j\le I, \sum_{j}E_j=I$ such that
\[D_j=\sum_{j\in \mathcal{A}_j}E_j \text{ for some subset } \mathcal{A}_j\subset [k]\pl,\]
then $(p_i,D_i)_{i=1}^N$ is called a simultaneous $(N,\lambda_1,\lambda_2)$ ID code for $W$. In the $n$-shot setting, a $(N,\lambda_1,\lambda_2)$ code for the product channel $W^{\otimes n}$ is called a code $(n,N,\lambda_1,\lambda_2)$ for $W$.

Denote $N(n,\lambda_1,\lambda_2)$ (resp. $N_{\text{sim}}(n,\lambda_1,\lambda_2)$) as the maximum $N$ such that there exists a $(n,N,\lambda_1,\lambda_2)$ (resp. simultaneous) ID-code and the (resp. simultaneous) ID capacity is defined as
\begin{align*}
&C_{\lambda_1,\lambda_2}(W)=\liminf_{n\to \infty} \frac{\log\log N(n,\lambda_1,\lambda_2)}{n}\\
&C_{\text{sim},\lambda_1,\lambda_2}(W)=\liminf_{n\to \infty}
\frac{\log\log N_{\text{sim}}(n,\lambda_1,\lambda_2)}{n}.
\end{align*}
From the definition, we find that
\begin{align}
C_{\lambda_1,\lambda_2}(W) \ge
C_{\text{sim},\lambda_1,\lambda_2}(W).
\end{align}
It was proved by \cite{loeber, ahlswede2002strong} that for a CQ channel,the simultaneous identification capacity and the identification capacity coincides and are given by
\begin{align} {C}_{\lambda_1,\lambda_2}(W)={C}_{\text{sim},\lambda_1,\lambda_2}(W)=C(W)\Label{eq:idcapacity}\end{align}
for any $0<\lambda_{1},\lambda_2<1$.
%Here $\chi(W)$ is the Holevo quantity
%\[ C(W)=\sup_{p} I(X;B)_\rho\]

The connection between ID code and overall channel resolvability is provided by the following lemma, whose classical version is due to Han and Verd\'u \cite{han1993approximation}, and the quantum version is used by Loeber \cite{loeber} in his proof of achievability of ID code.
\begin{lemma}\Label{lemma:bridge}
Let $W:\mathcal{X}\to \mathcal{S}(\mathcal{H}), x\mapsto W_x$ be a classical quantum channel.
If there exists an $(N,\lambda_1,\lambda_2)$ identification code $\mathcal{C}=(p_i,D_i)_{i=1}^N$ for $W$ and an integer $M$ satisfy
\[1-\lambda_1-\lambda_2>\eps(W,M)\]
then
$$ |\mathcal{X}|^M\ge N.$$
\end{lemma}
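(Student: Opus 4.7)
The plan is to adapt the classical Han--Verd\'u reduction \cite{han1993approximation} to the classical-quantum setting. The high-level strategy has two parts: (i) use the worst-input resolvability approximation to replace each input distribution $p_i$ of the identification code by an $M$-type input $q_i$, and then (ii) argue that distinct identification messages must give rise to distinct $q_i$'s; this bounds $N$ by the number of $M$-types, which in turn is at most the number of ordered codebooks $|\mathcal{X}|^M$.

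For step (i), I invoke the definition of $\eps(W,M)$ applied to each input distribution $p_i$ to extract some $q_i \in \mathcal{P}_M(\mathcal{X})$ satisfying $\tfrac{1}{2}\|W(p_i) - W(q_i)\|_1 \le \eps(W,M)$. Thus every resolved output state $W(q_i)$ is close in trace distance to the genuine channel output $W(p_i)$ of the identification code.

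For step (ii), I claim that the map $i \mapsto W(q_i)$ (and hence $i \mapsto q_i$) is injective. Suppose, for contradiction, that $W(q_i) = W(q_j)$ for some $i \neq j$. The elementary bound $|\Tr(D(\rho-\sigma))| \le \tfrac{1}{2}\|\rho-\sigma\|_1$ valid for any $0 \le D \le I$ and any two states $\rho,\sigma$ allows me to swap $W(p_i)$ for $W(q_i)$ when tested against $D_i$, and likewise to swap $W(p_j)$ for $W(q_j)$, each at an additive cost of $\eps(W,M)$. Combined with the identification conditions $\Tr(W(p_i)D_i) \ge 1-\lambda_1$ and $\Tr(W(p_j)D_i) \le \lambda_2$, the identity $W(q_i)=W(q_j)$ then forces $1-\lambda_1-\lambda_2$ to be majorized by a small multiple of $\eps(W,M)$, contradicting the hypothesis. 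Hence the $q_i$'s are pairwise distinct.

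Finally, every $M$-type $q \in \mathcal{P}_M(\mathcal{X})$ arises as the empirical distribution of some codebook $(x_1,\ldots,x_M) \in \mathcal{X}^M$, and different codebooks can give the same type, so $|\mathcal{P}_M(\mathcal{X})| \le |\mathcal{X}|^M$. Injectivity of $i \mapsto q_i$ from step (ii) then gives $N \le |\mathcal{P}_M(\mathcal{X})| \le |\mathcal{X}|^M$, as desired. There is no substantive obstacle in this argument; the only point that requires care is the exact factor relating the hypothesis $1-\lambda_1-\lambda_2 > \eps(W,M)$ to the trace-distance contradiction in step (ii).
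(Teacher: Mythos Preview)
Your approach is essentially the paper's own: it too selects $M$-type approximations $q_i$ of each $p_i$ with $\tfrac12\|W(p_i)-W(q_i)\|_1\le\eps(W,M)$, and then argues by pigeonhole that if $N>|\mathcal{X}|^M$ some $q_i=q_j$ must collide, which is tested against $D_i$ (equivalently, against the trace-distance lower bound $\|W(p_i)-W(p_j)\|_1\ge 2(1-\lambda_1-\lambda_2)$) to reach a contradiction. The constant you flag is real and appears in the paper's argument as well: the collision only forces $1-\lambda_1-\lambda_2\le 2\,\eps(W,M)$, so the hypothesis should strictly read $1-\lambda_1-\lambda_2>2\,\eps(W,M)$; this cosmetic slip does not affect the asymptotic application in Theorem~\ref{Th2}, where $\lambda_1,\lambda_2$ can be taken arbitrarily small.
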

\begin{proof}The proof is essentially in Lober's paper  \cite[Lemma 4.10 and Theorem 4.11]{loeber}. We repeat it here for completeness. Denote
$\mathcal{D}_i:\mathcal{S}(\mathcal{H})\to l_\infty(\{0,1\})$ as the binary measurement
\[\mathcal{D}_i(\rho)=(\Tr(\rho D_i), \Tr(\rho (I-D_i)) ).\]
Then for any $i\neq j$, we have
\[ \norm{W(p_i)-W(p_j)}{1}\ge \norm{\mathcal{D}_i\circ (W(p_i))-\mathcal{D}_i\circ W(p_j)}{1}\ge 2(1-\lambda_1-\lambda_2)>2\eps. \]
By the definition of $\eps(M,W)$, there exists $M$-type distribution $(q_1,\cdots,q_N)$ such that
\[ \frac{1}{2}\norm{W(p_i)-W(q_i)}{1}\le \eps. \]
In total, there are less than $|\mathcal{X}|^M$ many $M$-types in $P(\mathcal{X})$. Now, suppose $N>|\mathcal{X}|^M$, there must be some $i\neq j$ such that $q_i=q_j$. It follows that
\[ \frac{1}{2}\norm{W(p_i)-W(p_j)}{1}\le\norm{W(p_i)-W(q_i)}{1}+\norm{W(p_j)-W(q_j)}{1}\le 2\eps\]
which is a contradiction. This finishes the proof.
\end{proof}

The above lemma says if there exists an $(N,\lambda_1,\lambda_2)$ identification code $\mathcal{C}=(p_i,D_i)_{i=1}^N$ with
\[ N>|\mathcal{X}|^M,\]
then
\[ \eps(M,W)\le 1-\lambda_{1}-\lambda_2 .\]
Let us denote the optimal error for identification code as
\[\eps_{\text{ID}}(N,W):=\inf \{\lambda_{1}+\lambda_2\pl |\pl \exists (N,\lambda_1,\lambda_2) \text{ ID code}\pl\}.\]
Then the above lemma gives the following one shot estimate
between the minimum error of identification codes and
the minimum error of channel resolvability for CQ-channels;
\[ \eps \Big(\Big\lfloor \frac{\log N}{\log |\mathcal{X}|}\Big\rfloor,W\Big)
\ge 1-\eps_{\text{ID}}(N,W).\]

\subsection{Strong converse for Channel Resolvability}\Label{S7}
Next, we proceed to the strong converse for
channel resolvability for CQ channels.
To prove this statement, we show the following relation between
the identification capacity and the capacity for channel resolvability for CQ channels.

\begin{theorem}\Label{Th2}
For $\epsilon \ge 1-\lambda_1-\lambda_2$, we have
\begin{align}
C_{\lambda_1,\lambda_2}(W)  \le R_{\epsilon}(W) .
\end{align}
Thus, for $0<\epsilon<1$,
\begin{align}
C(W)  \le R_{\epsilon}(W) .
\end{align}
\end{theorem}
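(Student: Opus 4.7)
The plan is to deduce Theorem~\ref{Th2} as a short asymptotic consequence of two ingredients already in hand: the one-shot bridge Lemma~\ref{lemma:bridge}, and the identification capacity formula \eqref{eq:idcapacity}, which asserts $C_{\lambda_1,\lambda_2}(W)=C(W)$ for every $\lambda_1,\lambda_2\in(0,1)$. Once these are in place, the theorem is a matter of counting and a single $\liminf$.

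First, fix $\lambda_1,\lambda_2\in(0,1)$ in the regime where the hypothesis $1-\lambda_1-\lambda_2>\eps(W,M)$ of Lemma~\ref{lemma:bridge} can be met whenever $\eps(W,M)\le\epsilon$, i.e.\ with $\lambda_1+\lambda_2+\epsilon<1$. For each blocklength $n$ I apply the bridge to the product channel $W^{\otimes n}$, whose input alphabet is $\mathcal{X}^n$ of cardinality $|\mathcal{X}|^n$: choose an optimal $(N(n,\lambda_1,\lambda_2),\lambda_1,\lambda_2)$ identification code for $W^{\otimes n}$ together with $M_n:=M_\epsilon(W^{\otimes n})$, so that
\begin{align*}
\eps(W^{\otimes n},M_n)\;\le\;\epsilon\;<\;1-\lambda_1-\lambda_2.
\end{align*}
Lemma~\ref{lemma:bridge}, applied to $W^{\otimes n}$, then yields
\begin{align*}
(|\mathcal{X}|^{n})^{M_n}\;\ge\;N(n,\lambda_1,\lambda_2),
\end{align*}
which is $n M_n \log|\mathcal{X}| \ge \log N(n,\lambda_1,\lambda_2)$ after taking logarithms.

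Next, taking logarithms a second time and dividing by $n$ gives
\begin{align*}
\frac1n\log M_n \;\ge\; \frac{\log\log N(n,\lambda_1,\lambda_2)}{n}\;-\;\frac{\log(n\log|\mathcal{X}|)}{n}.
\end{align*}
Since the rightmost term vanishes as $n\to\infty$, passing to $\liminf_{n\to\infty}$ on both sides produces $R_\epsilon(W)\ge C_{\lambda_1,\lambda_2}(W)$. The second statement of the theorem then follows by invoking \eqref{eq:idcapacity}: for any $\epsilon\in(0,1)$ one may choose $\lambda_1,\lambda_2>0$ with $\lambda_1+\lambda_2<1-\epsilon$, and the identification capacity formula gives $C_{\lambda_1,\lambda_2}(W)=C(W)$, whence $R_\epsilon(W)\ge C(W)$.

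The real substance of the argument is carried by Lemma~\ref{lemma:bridge} and by the (simultaneous) identification capacity theorem \eqref{eq:idcapacity} of \cite{loeber,ahlswede2002strong}; once these are granted, no further quantum-specific input is needed, and I do not anticipate a serious obstacle in the assembly step itself. The only bookkeeping point to watch is the strict-versus-weak inequality between $\epsilon$ and $1-\lambda_1-\lambda_2$, which is harmless because $\lambda_1,\lambda_2$ may be taken arbitrarily small in the final step.
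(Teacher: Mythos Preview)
Your proposal is correct and follows essentially the same route as the paper: apply Lemma~\ref{lemma:bridge} to $W^{\otimes n}$ with $M=M_\epsilon(W^{\otimes n})$, take two successive logarithms, divide by $n$ and pass to the $\liminf$, then invoke \eqref{eq:idcapacity} for the second claim. Your care with the strict inequality $\epsilon<1-\lambda_1-\lambda_2$ (rather than the $\ge$ appearing in the statement) is in fact what the hypothesis of Lemma~\ref{lemma:bridge} requires, and the paper's own proof implicitly uses the same direction.
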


The above theorem implies the strong converse for
channel resolvability for CQ channels as follows.
Summarizing Theorems \ref{Th1} and \ref{Th2} with \eqref{PSE}, we obtain our main claim in Theorem~\ref{theorem:worst-input}, i.e.,
\begin{align}
R(W) =R_{\epsilon}(W) = C(W)
\end{align}
for $0<\epsilon<1 $.

\begin{proof}Recall the notation that
$$M_\epsilon(W):= \min\{M \in \mathbb{N}_+ \ |\ \eps(W,M) \le \epsilon\}.$$
Lemma \ref{lemma:bridge} implies that
\begin{align}
N(n,\lambda_1,\lambda_2) \le
(|\mathcal{X}|^n)^{M_\epsilon(W^{\otimes n})} .
\end{align}
Taking the logarithm, we have
\begin{align}
\log N(n,\lambda_1,\lambda_2) \le
M_\epsilon(W^{\otimes n}) \log (|\mathcal{X}|^n)
=n M_\epsilon(W^{\otimes n}) \log |\mathcal{X}|.
\end{align}
Taking the logarithm, again, we have
\begin{align}
\log \log N(n,\lambda_1,\lambda_2) \le
\log n+\log  M_\epsilon(W^{\otimes n})+\log \log |\mathcal{X}|.
\end{align}
Thus,
\begin{align}
\frac{1}{n}\log \log N(n,\lambda_1,\lambda_2) \le
\frac{1}{n}\log n+\frac{1}{n}\log  M_\epsilon(W^{\otimes n})+
\frac{1}{n}\log \log |\mathcal{X}|.
\end{align}
Taking the limit, we have
\begin{align*}
C_{\lambda_1,\lambda_2}(W)
=&\liminf_{n\to \infty} \frac{1}{n}\log \log N(n,\lambda_1,\lambda_2) \le
\liminf_{n\to \infty} \frac{1}{n}\log  M_\epsilon(W^{\otimes n})
=R_{\epsilon}(W).\qedhere
\end{align*}
\end{proof}

\section{Discussion}\Label{S8}
We have studied channel resolvability for CQ channels under the two problem settings,
the worst-input formulation and the fixed-input formulation.
To address the worst-input formulation, similar to existing studies \cite{han1993approximation},
we focus on its relation to identification codes for CQ channels.
While the simultaneous identification codes is a special case of identification codes,
we have observed that the direct part of identification codes with both simultaneous and non-simultaneous settings
works as the strong converse of
the worst-input channel resolvability for CQ channels.
The direct part of worst-input is shown by employing the information spectrum method for CQ channels formulated in \cite{1207373}.
To address the fixed-input formulation,
we extend the approach from \cite{watanabe2014strong} to the case with CQ channels,
in which we have used the latest result for a new version of quantum Sanov theorem, which
focuses on a quantum version of empirical distributions.

Here, we emphasize the difference between two settings.
The fixed-input formulation with the input $p$
has the minimum rate given by $\inf_{q:W(q)=W(p)}I(X;B)_{ W\times q}$,
whereas the worst-input formulation
has the minimum rate $\sup_{p}I(X;B)_{W\times p}$.
That is, the maximum value of the minimum rate of the fixed-input formulation
does not equal the minimum rate of the worst-input formulation, in general (see Example~\ref{example:separation}).
This fact means that the worst-input in the $n$-shot setting could be achieved by a non-i.i.d.~input, which shows the difficulty of the worst-input formulation. The ideas and techniques used in the work may shed a light on other converse for covering type problems such as quantum decoupling.

Recently, the paper \cite{APW} studies channel resolvability for quantum-quantum channels.
This topic is more difficult and challenging so that
the results \cite{APW}, when restricted to CQ channels, are limited to partial results. Our results of channel resolvability for CQ channels cannot be included in that for fully quantum channels
as explained in Remark \ref{REM}.
Therefore, further development of covering lemma and channel resolvability for fully quantum channels is an interesting open problem.

\section*{Acknowledgement}HC and LG are very grateful to Yu-Chen Shen for helpful discussions throughout the project.
%MH is supported in part by the National Natural Science Foundation of China (Grant No. 62171212). LG is supported in part by the National Natural Science Foundation of China.

%\input{preliminary}

%\bibliography{resolvability}
%\bibliographystyle{IEEEtran}

\end{document}